\newcommand{\pd}[2]{\frac{\partial #1}{\partial #2}}
\newcommand{\pdd}[2]{\frac{\partial^{2} #1}{\partial #2 ^{2}}}
\newcommand{\ave}[1]{\mathbb{E}#1}
\newcommand{\explr}[1]{\exp\left[ #1 \right]}
\newcommand{\chem}[2]{ {{#1\atop\longrightarrow}\atop{\longleftarrow\atop #2}} }
\newcommand{\norm}[1]{\Vert #1 \Vert}
\DeclareMathOperator*{\arginf}{arg\,inf}
\newcommand{\Na}{\text{\tiny Na}}
\newcommand{\K}{\text{\tiny K}}
\newcommand{\rl}{\mathrm{leak}}
\newcommand{\rf}{I_{\rm ion}}
\newcommand{\app}{\mathrm{app}}
\newcommand{\vx}{\mathbf{x}}
\newcommand{\vp}{\bm{p}}
\newcommand{\pv}{p_{v}}
\newcommand{\pw}{p_{w}}
\newcommand{\hgam}{\varphi}
\newcommand{\hgamna}{\tilde{\varphi}}
\newcommand{\Nat}{$\text{Na}^{{+}}$}
\newcommand{\Kt}{$\text{K}^{{+}}$}
\newcommand{\rmp}{\mathrm{p}}
\newcommand{\Dt}{\Delta t}
\newcommand{\act}{S}
\newcommand{\vxa}{\vx_{A}}
\newcommand{\lsim}{\stackrel{\ln}{\simeq}}
\title{Spontaneous excitability in the Morris--Lecar model with ion channel noise\thanks{This work was supported by the Mathematical Biosciences Institute and the National Science Foundation under grant (DMS 0931642)}}
\author{Jay M. Newby\thanks{Mathematical Bioscience Institute, Ohio State University, 1735 Neil Ave. Columbus, OH 43210}}
\begin{document}
\maketitle

\begin{abstract}
Noise induced excitability is studied in type I and II Morris--Lecar neurons subject to constant sub threshold input, where fluctuations arise from sodium and potassium ion channels. Ion channels open and close randomly, creating current fluctuations that can induce spontaneous firing of action potentials. Both noise sources are assumed to be weak so that spontaneous action potentials occur on a longer timescale than ion channel fluctuations. Asymptotic approximations of the stationary density function and most probable paths are developed to understand the role of channel noise in spontaneous excitability. Even though the deterministic dynamical behavior of type I and II action potentials differ, results show that a single mechanism explains how ion channel noise generates spontaneous action potentials.
\end{abstract}

\pagestyle{myheadings}
\thispagestyle{plain}
\markboth{PREPRINT}{\sc Spontaneous excitability in the Morris--Lecar model}

\section{Introduction}
The Morris--Lecar (ML) equations were originally developed as a model of calcium dynamics in muscle fibers of the barnacle {\em Balanus nubilus} \cite{morris81a}.
The ML equations can also be interpreted as simplified version of the Hodgkin--Huxley equations, a well known model of single neuron transmembrane voltage dynamics.
The most widely used simplified version of the Hodgkin--Huxley equations is the so-called FitzHugh--Nagumo equations.
Unlike the simpler FitzHugh--Nagumo equations, ML displays a richer set of dynamics, in particular, several different types of excitability.

The ML equations are given by,
\begin{align}
  \label{eq:28}
  C_{\rm m}\frac{dv}{dt} &= x_{\infty}(v)f_{\Na}(v) + wf_{\K}(v) + f_{\rl}(v) + I_{\app}\\
\nonumber
  \frac{dw}{dt} &= \frac{w_{\infty}(v) - w}{\tau_{w}(v)},
\end{align}
where $v$ is the transmembrane voltage and $w$ represents the fraction of open \Kt~channels.
The functions $f_{i}(v) = g_{i}(v_{i}-v)$ determine the ionic currents.
The fraction of open \Nat~channels is assumed to be an instantaneous function of $v$ with
\begin{equation}
  \label{eq:31}
  x_{\infty}(v) = (1 + \tanh(2(\gamma_{\Na}v + \kappa_{\Na})))/2.
\end{equation}
The steady state fraction and time scale for $w$ are given by
\begin{equation}
  \label{eq:32}
  w_{\infty}(v)=(1 + \tanh(2(\gamma_{\K}v + \kappa_{\K})))/2,\quad \tau_{w}(v) = 2\beta_{\K}\cosh(\gamma_{\K}v+\kappa_{\K}),
\end{equation}
respectively.
(See Appendix \ref{sec:appendix_params} for parameter values.)

The deterministic ML model should be viewed as a mean field limit of a stochastic model that includes the random opening and closings of ion channels.
Single channel opening and closing statistics can be measured experimentally.
The channel variables that modify the ionic conductances represent the fraction of open channels.
The fraction of open channels is a continuous, deterministic quantity if (i) the number of ion channels is taken to be infinite while the conductance of an single channel vanishes or (ii) channels open and close infinitely fast so that the fraction of open channels is an instantaneous function of the voltage.
For the ML model, the potassium channel variable $w$ is obtained by the former while the sodium channel variable $x_{\infty}(v)$ is determined by the latter.
Note that $w$ is a dynamic variable with its own governing equation while $x_{\infty}$ can be viewed as the quasi-steady-state fraction of open sodium channels.

The deterministic ML model can display several different types of excitable behavior.
In every case, there is a single stable fixed point representing the resting voltage and we assume that the applied current $I_{\app}$ is below threshold so that the deterministic system does not exhibit repetitive firing.
Below threshold, only current fluctuations from stochastic ion channels can induce an excitable event.
If current fluctuations push the system over a voltage threshold to the excited state, the voltage undergoes a transient spike called an action potential before returning to the resting voltage.
We consider two situations.
A type I neuron has three fixed points: a stable fixed point corresponding to the resting state, an unstable saddle, and an unstable fixed point corresponding to the excited state.
A type II neuron has one fixed point corresponding to the stable resting state.
Deterministic repetitive firing can occur in type I and II neurons when the input current is increased above threshold \cite{izhikevich00a}.

In most physically relevant cases, the system is close to the deterministic limit so that the ion channel fluctuations are weak compared to the deterministic forces.
In other words, a deterministic trajectory and a stochastic trajectory that share the same initial conditions are likely to remain close over sufficiently small time scales.
This situation is commonly referred to as {\em weak noise}.
Under weak noise conditions, a rare sequence of fluctuations can cause metastable dynamical behavior that the deterministic model cannot describe.
Metastable behavior occurs on long timescales.

A spontaneous excitable event can naturally be separated into two phases: the initiation phase and the excitation phase.
The initiation phase is driven by ion channel fluctuations and is therefore a metastable transition.
The excitation phase begins once fluctuations increase the voltage to a threshold.
Then, the system undergoes a transient increase in voltage before returning to the stable fixed point.
Unlike the initiation phase, the excitation phase is not metastable, instead being driven primarily by deterministic forces.
If we can derive a description of the metastable initiation phase then it can be combined with the deterministic description of the excitation phase to obtain a complete picture of the spontaneous excitable event.

In the weak noise limit, the probability that the process takes a particular path from point A to point B is sharply peaked along a {\em most probable path} (MPP).
A MPP is a statistic, similar to the mode, of a probability distribution functional over the function space of continuous paths.
Although they describe a stochastic process, MPPs themselves are not stochastic.
One can show using {\em large deviations theory} \cite{freidlin12a,feng06a} that the likelihood of deviating from the MPP is an exponentially decreasing function of the magnitude of the deviation.
In other words, stochastic trajectories are highly likely to closely follow the MPP.
In general, there are two kinds of MPPs.
Any deterministic trajectory connecting point A to point B is a MPP.
If there is no deterministic path that connects the two points and the transition is noise induced metastable transition, MPPs build {\em action} and become more improbable as the path gets longer.
The action is a measure of how improbable the MPP is.
By themselves, MPPs provide a good qualitative description of how different metastable transitions occur; they can be thought of as describing noise induced dynamical behavior.

MPPs can also be used to approximate other important properties of the stochastic process.
One can show that there is a nontrivial connection between MPPs that start at the stable fixed point and the stationary probability density function, which describes the relative fraction of time the system spends in different dynamical regimes and determines how rare excitable events are.
Since stochastic trajectories leading from the stable resting voltage to the threshold of an excitable event are described by MPPs, it is no surprise that MPPs determine the asymptotics of the average metastable transition time, also called the mean first passage time (MFPT) or mean exit time.

Several groups have studied stochastic conductance based single neuron models using large deviation theory \cite{berglund06a,khovanov13a,chow96a}.
Until recently, it has only been possible to examine conductance models perturbed by a continuous Markov process.
Channel noise can be approximated by a continuous Markov process, however it is well known that this can generate significant errors for metastable dynamics \cite{newby11b,newby13a}.
Recently, the authors have studied type II excitability in the stochastic ML model with channel noise, deriving MPPs using the WKB method \cite{keener11a,newby13b}.
While the results showed excellent agrement with Monte Carlo simulations, a systematic connection between large deviation theory and the WKB method was not established.

The WKB method is a practical tool used extensively to study metastability in continuous Markov processes and birth-death processes \cite{ludwig75a,dykman96a,maier97a,schuss10a,tel89a}.
The connection between WKB and large deviation theory is well studied for continuous Markov process \cite{ludwig75a} and for birth-death processes \cite{hanggi84a,peliti85a}.
Establishing such a connection for the stochastic ML model is complicated by the presence of fast and slow variables in a stochastic process that has both continuous and discrete elements.
However, due to recent advances in this area \cite{bressloff14a,kifer09a}, a systematic analysis is now possible.
From a practical perspective, establishing a link between large deviation theory and the WKB method facilitates the development of numerical algorithms.
MPPs are computed using the {\em geometric minimum action method} (GMAM) \cite{heymann08a}.
We also develop an {\em ordered upwind method} (OUM) to compute the asymptotic approximation of the stationary density function, based on a similar algorithm for continuous Markov processes \cite{cameron12a}.

The goal of this paper is to develop theory to describe the metastable behavior leading to different types of spontaneous excitation in the stochastic ML model.
In particular, we develop asymptotic approximations of the MPP for a metastable excitable event and the stationary probability density function.

The paper is organized as follows. 
In Section \ref{sec:stoch-model} we introduce the stochastic ML model. 
Then, in Section \ref{sec:mpp} we formulate an approximation of most probable paths, and show how they are connected to the stationary density function.
We show how all relevant quantities can be calculated using the WKB method in Section \ref{sec:wkb}.
Results for type I excitability are presented in Section \ref{sec:sp1}, including spontaneous action potentials and spontaneous bursting.
In Section \ref{sec:sp2}, we present results for the ML model showing type II excitability.

\section{Stochastic model of Morris--Lecar with ion channel noise}
\label{sec:stoch-model}
The stochastic version of the ML model considered here, using simple two-state ion channels, was originally developed in \cite{keener11a,newby13b}.
The voltage equation with $n = 0, 1, \cdots, N$ open \Nat~channels and $m=0, 1, \cdots, M$ open \Kt~channels is
\begin{equation}
  \label{eq:5}
  \frac{dV}{dt} = \rf(v, m, n)  \equiv \frac{n}{N}f_{\Na}(v) + \frac{m}{M}f_{\K}(v) + f_{\rl}(v) + I_{\app}.
\end{equation}
We assume that each channel is either open or closed and switches between each state according to
\begin{equation}
\label{eq:35}
  C\chem{\beta_{i}a_{i}(v)}{\beta_{i}b_{i}(v)}O,\quad i = \mathrm{Na},\;\mathrm{K},
\end{equation}
where the transition rates are $a_{\Na}(v) = e^{4(\gamma_{\Na}v + \kappa_{\Na})}$, $b_{\Na} = 1$, $a_{\K}(v) = e^{\gamma_{\K}v + \kappa_{\K}}$, and $b_{\K}(v) = e^{-\gamma_{\K}v - \kappa_{\K}}$.  
We assume that the \Nat~channels open and close rapidly, so that $1/\beta_{\Na} \ll \tau_{m}$, where $\tau_{m}=C_{\rm m}/g_{\rm L}$ is the membrane time constant.  
Taking $m$ and $n$ in \eqref{eq:5} to be stochastic birth/death processes, we obtain a stochastic hybrid process.
We formulate the process in terms of its probability density function, which satisfies the differential Chapman--Kolmogorov (CK) equation \cite{gardiner83a},
\begin{equation}
\label{eq:34}
  \pd{}{t}\rmp(v, m, n, t) = -\pd{}{v}(\rf(v, m, n)\rmp) + \beta_{\K}\mathbb{L}_{\K}\rmp + \beta_{\Na}\mathbb{L}_{\Na}\rmp.
\end{equation}
The jump operators,
\begin{equation}
\label{eq:12a}
    \mathbb{L}_{\Na} = (\mathbb{E}^{+}_{n}-1)\Omega^{-}_{\Na}(n|v) + (\mathbb{E}^{-}_{n}-1)\Omega^{+}_{\Na}(n|v),
\end{equation}
and
\begin{equation}
\label{eq:12b}
    \mathbb{L}_{\K} = (\mathbb{E}^{+}_{m}-1)\Omega^{-}_{\K}(m|v) + (\mathbb{E}^{-}_{m}-1)\Omega^{+}_{\K}(m|v),
\end{equation}
govern opening/closing of \Nat~and \Kt~channels, respectively, with the jump operator defined by 
\begin{equation}
  \label{eq:29}
  \mathbb{E}^{\pm}_{s}f(s) = f(s\pm 1).
\end{equation}
The transition rates are 
\begin{gather}
  \label{eq:30}
\Omega^{-}_{\Na}(n|v) = n,\quad \Omega^{+}_{\Na}(n|v) = (N-n)a_{\Na}(v),\\ 
\Omega^{-}_{\K}(m|v) = m b_{\K}(v),\quad \Omega^{+}_{\K}(m|v) =(M - m)a_{\K}(v).
\end{gather}

The deterministic system \eqref{eq:28} is recovered in the limit $\beta_{\Na}\to \infty$, $M\to \infty$, and we assume that the limit is taken with $\hgam = \beta_{\Na}/M$ fixed.  
After setting $x = n/N$ and $w = m/M$, the limit yields $x_{\infty}(v) = a_{\Na}(v)/(1+a_{\Na}(v))$ and $w_{\infty}(v) = a_{\K}(v)/(b_{\K}(v)+a_{\K}(v))$, which is consistent with \eqref{eq:28} \cite{keener11a}. 
The parameter $\beta_{\K}$ determines how rapidly the \Kt~channels fluctuate.
Here, we assume that $v$ and $w$ change on the same timescale, with $\tau_{m}\beta_{\K}=O(1)$.

The model has two large parameters, and in order to obtain a single small parameter to carry out a systematic perturbation expansion, we define $\epsilon \ll 1$ such that $\beta_{\Na}^{-1} = \hgamna\epsilon$ and  $M^{-1} = \hgam \epsilon$, with $\hgamna/\tau_{m} = O(1)$ and $\hgam/\tau_{m} = O(1)$. 
(We set $\hgamna = 1$.)
Of course, $N$ could also be a large parameter, but taking the limit $N\to\infty$, $M\to\infty$ yields a different deterministic limit than \eqref{eq:28} (requiring an additional equation for the \Nat~conductance).  We emphasize that our choice of scaling means that the approximation is valid for any choice of $N>0$.

\section{Most probable paths}
\label{sec:mpp}

In this section, we develop theory for calculating asymptotic approximations of MPPs and the stationary density.
We assume that the system is close to deterministic, with a single small parameter $0 < \epsilon \ll 1$ such that in the limit $\epsilon \to 0$, the deterministic system is reached.
First we develop a path description of the process that allows us to define a most probable path.
We also obtain a large deviation principle that established the connection between MPPs and the stationary density function.

The approximations introduced in the section are based on the idea of formulating Laplace's method (a special case of the method of steepest decent) specifically for probability theory \cite{freidlin12a}.
We introduce the notation ``$a\lsim b$'' to represent a logarithmically asymptotic relation, e.g.,
\begin{equation*}
  \label{eq:54}
  f(x; \epsilon) \lsim e^{R(x)/\epsilon} \Rightarrow  \epsilon \log(f(x; \epsilon)) = R(x) + O(\epsilon).
\end{equation*}
Recall that Laplace's method is used to approximate certain integrals.
For example, suppose we have a positive function $F(x)$ for which there is a single minimum at $x_{m}$ such that $F'(x_{m}) = 0$ with $\lim_{x\to \pm \infty}F(x) = \infty$.
Laplace's method yields the approximation
\begin{equation*}
  \int_{-\infty}^{\infty}e^{-F(x)/\epsilon} dx \lsim e^{- F(x_{m})/\epsilon}.
\end{equation*}
The idea is that the integral takes its largest contribution at the point $x_{m}$ where $F(x)$ is at its minimum because the integrand is exponentially decreasing away from $x_{m}$.

To see how this works in the context of probability theory, consider the random variable $X_{\epsilon}$.
We want to approximate the distribution with
\begin{equation}
\label{eq:57}
  {\rm Pr}[X_{\epsilon}\in(x, x+dx) ] \lsim dx\, e^{-L(x)/\epsilon},
\end{equation}
for some function $L(x)$ defined as follows.
Let $p \in \mathbb{R}$.
Using \eqref{eq:57} we notice that
\begin{equation*}
  \ave{\explr{\frac{p X_{\epsilon}}{\epsilon} }} \lsim \int_{-\infty}^{\infty}\explr{\frac{p x - L(x)}{\epsilon} }dx,
\end{equation*}
where $\ave{}$ denotes the expectation or mean.
Laplace's method is based on the idea that the above integral obtains its largest contribution at $H(p) \equiv \sup_{x \in \mathbb{R}}\{ xp - L(x)\}$ so that $\ave{\exp\{p X_{\epsilon}/\epsilon \}} \lsim \exp\{ H(p)/\epsilon\}$.
Hence, we can rewrite $H(p)$ as
\begin{equation}
\label{eq:72}
  H(p) = \lim_{\epsilon \to 0}\epsilon \log \ave{\explr{\frac{p X_{\epsilon}}{\epsilon} }}.
\end{equation}
Assume that $h$ is continuous and convex for all $p\in \mathbb{R}$ with $H(0) = 0$.
Then, the desired function $L(x)$ is related to $H(p)$ through the so called {\em Legendre transform}:
\begin{equation}
  \label{eq:73}
  L(x) = \sup_{p\in\mathbb{R}}\{xp - H(p) \}, \quad H(p) = \sup_{x \in \mathbb{R}}\{ xp - L(x)\}.
\end{equation}
After differentiating $L$ with respect to $x$ and $H$ with respect to $p$, we observe that the maximizers are given implicitly by
\begin{equation*}
  x = H'(p), \quad p = L'(x).
\end{equation*}
The above is well defined if $L(x)$ and $H(p)$ are convex functions.
Hence, the procedure for Laplace's method is to first compute $H$ using \eqref{eq:72}.
If the limit exists then the probability density function can be approximated by \eqref{eq:57}, where $L(x)$ is given by \eqref{eq:73}.
This procedure can be generalized to apply Laplace's method to functional integrals.

For notational convenience, we define $\vx = (v, w)$, treating $w = m/M$ as a continuous variable.
Discretize time with $t_{j} = t_{0} + j \Dt$, $j = 0,\cdots, J+1$.  Let $\{n_{j}\}$ and $\{\vx_{j}\}$ be a discretized path where $n_{j} = n(t_{j})$ and $\vx_{j} = \vx(t_{j})$.
Assume that the end points ($j=0$ and $j=J+1$) of the path are fixed with $n_{J+1} = n$ and $\vx_{J+1} = \vx$.
The probability density function satisfying \eqref{eq:34} can be written in terms of the compounded CK equation, which is a slightly different version of the standard integral form of the Chapman--Kolmogorov (CK) equation.
If the path consists of a single interior point (i.e., $J=1$) then we have the standard CK equation,
\begin{equation}
  \label{eq:124}
  \rmp(n, \vx, t | n_{0}, \vx_{0}, t_{0}) = \sum_{n_{1} = 0}^{N}\int_{-\infty}^{\infty}d\vx_{1}\rmp(n, \vx, t | n_{1}, \vx_{1}, t_{1})\rmp(n_{1}, \vx_{1}, t_{1} | n_{0}, \vx_{0}, t_{0}).
\end{equation}
Compounding over multiple interior points along a path, we have
\begin{equation}
  \label{eq:43}
  \rmp(n, \vx, t | n_{0}, \vx_{0}, t_{0})  =  \sum_{n_{1},\cdots, n_{J}}\int_{-\infty}^{\infty}d\vx_{1}\cdots\int_{-\infty}^{\infty}d\vx_{J}\mathcal{P}[\{n_{j}\}, \{\vx_{j}\}],
\end{equation}
where $\mathcal{P}$ is the joint distribution over the path.
Using the Markov property, the path distribution can be written as the product
\begin{equation}
  \label{eq:129}
  \mathcal{P}[\{n_{j}\}, \{\vx_{j}\}] = \prod_{j=1}^{J+1}\rmp(n_{j}, \vx_{j}, t_{j} | n_{j-1}, \vx_{j-1}, t_{j-1}).
\end{equation}

The goal is to find a useful approximation for the path distribution \eqref{eq:129} that can be used to calculate MPPs.
Formally, we take a continuum limit $J\to \infty$ with $\Dt\to 0$ so that $n(t_{j}) \to n(t)$ and $\vx(t_{j}) \to \vx(t)$.
For $\epsilon \ll 1$, the marginal path distribution (after summing over the $\{n_{j}\}$) can be approximated (see Appendix \ref{sec:app_path}) by
\begin{equation}
  \label{eq:154}
  \mathcal{P}[\vx(t)] \lsim \explr{-\frac{1}{\epsilon}\int_{t_{0}}^{t}L[\vx(s), \vx'(s)]ds},
\end{equation}
where $\vx'(t) = \frac{d\vx}{dt}$.
The {\em Lagrangian} and the {\em Hamiltonian} are related by the Legendre Transform:
\begin{equation}
  \label{eq:48}
  L[\vx, \vx'] = \sup_{\vp \in \mathbb{R}^{2}}\left\{\vp \cdot \vx' - \mathcal{H}(\vx, \vp)\right\}, \quad
  \mathcal{H}(\vx, \vp) = \sup_{\vx'\in \mathbb{R}^{2}}\{ \vp\cdot \vx' - L[\vx, \vx'] \}.
\end{equation}
where $\vp = (\pv, \pw)$ is called the {\em conjugate momentum} which plays an important role as explained below.
For \eqref{eq:48} to be well defined, $L$ must be a convex function of $\vx'$ and $\mathcal{H}$ must be a convex function of $\vp$.
Given the convexity requirements, it follows that the derivatives $\nabla_{\vx'}L[\vx, \vx']$ and $\nabla_{\vp}\mathcal{H}(\vx, \vp)$ are monotonic functions, and the maximizers are given by
\begin{equation}
  \label{eq:53}
  \vx' = \nabla_{\vp}\mathcal{H}(\vx, \vp),  \quad \vp = \nabla_{\vx'}L[\vx, \vx'].
\end{equation}
There is a generalization of \eqref{eq:72} that can be used to compute $\mathcal{H}$, but it is more useful in practice to use the WKB method, as shown below in Section \ref{sec:wkb}.
Two general and essential properties of $\mathcal{H}$ (which are established concretely in Section \ref{sec:wkb}) that we use repeatedly throughout the paper involve its behavior at $\vp = 0$, namely
\begin{equation}
  \label{eq:151}
  \mathcal{H}(\vx, 0) = 0, \quad \nabla_{\vx}\mathcal{H}(\vx, 0) = \frac{d }{dt}\vx_{\rm det},
\end{equation}
where $\frac{d }{dt}\vx_{\rm det}$ is the deterministic dynamics that satisfies the ML equations \eqref{eq:28}.

Define the {\em action} (sometimes called the rate function) as
\begin{equation}
  \label{eq:134}
  \act(t)  \equiv \int_{t_{0}}^{t}L[\vx(s), \vx'(s)]ds.
\end{equation}
It follows from the definition \eqref{eq:48} and \eqref{eq:151} that $S(t)$ is a nondecreasing function, and that $S' = 0$ in the limit $\vx \to \vxa$.
This shows that the action is a measure of how likely a given trajectory is.
As the action increases, the likelihood of observing the trajectory decreases.

In functional integral notation, the marginal density function (after summing over $n$ and $n_{0}$) can be written as
\begin{equation}
  \label{eq:47}
  \rmp(\vx, t | \vx_{0}, t) \lsim \int_{\mathbf{C}_{\vx_{0}}^{\vx}}\mathcal{D}[\vx(t)]\explr{-\frac{1}{\epsilon}S(t)}.
\end{equation}
where $\mathbf{C}_{\vx_{0}}^{\vx}$ is the set of all continuous paths such that $\vx(t_{0}) = \vx_{0}$ and $\vx(t) = \vx$.
Fortunately, we never have to explicitly evaluate \eqref{eq:47}, instead we use Laplace's method to obtain a useful approximation.
A most probable path is defined as the largest contribution to \eqref{eq:47}, 
\begin{equation}
  \label{eq:55}
  \vx_{\rm MP}(t) \equiv \arginf_{\vx(t) \in \mathbf{C}_{\vx_{0}}^{\vx}}\int_{t_{0}}^{t}L[\vx(s), \vx'(s)]ds.
\end{equation}
One can show using the calculus of variations that the MPP defined by \eqref{eq:55} is a solution to the Euler--Lagrange equation,
\begin{equation}
  \label{eq:98}
  \frac{d}{dt}\nabla_{\vx'}L[\vx, \vx'] = \nabla_{\vx}L[\vx, \vx'].
\end{equation}
The above can be equivalently written as
\begin{equation}
  \label{eq:115}
  \frac{d}{dt}\left(\vx'\cdot \nabla_{\vx'}L[\vx, \vx'] - L[\vx, \vx']\right) = 0.
\end{equation}
Using \eqref{eq:48} and \eqref{eq:53}, we have the first integral,
\begin{equation}
  \label{eq:123}
  \vx'\cdot \vp - L[\vx, \vx'] = \mathcal{H}(\vx, \vp) = \text{Const}.
\end{equation}
Hence, $\mathcal{H}$ is constant along a MPP.~ 
Using Hamilton's principle, MPPs satisfy Hamilton's equations,
\begin{equation}
\label{eq:41}
  \vx' = \nabla_{\vp}\mathcal{H}(\vx, \vp), \quad \vp' = -\nabla_{\vx}\mathcal{H}(\vx, \vp).
\end{equation}
The first equation simply follows from \eqref{eq:53}.
The second equation is derived as follows.
Using the second equation in \eqref{eq:53} and \eqref{eq:98}, we have 
\begin{equation}
  \label{eq:126}
  \vp'(t) \equiv \frac{d}{dt}\vp = \frac{d}{dt}\nabla_{\vx'}L[\vx, \vx'] = \nabla_{\vx}L[\vx, \vx'].
\end{equation}
We therefore need to show that $\nabla_{\vx'}L[\vx, \vx'] = -\nabla_{\vx}\mathcal{H}(\vx, \vp)$ on MPPs.
Using \eqref{eq:48} and writing the maximizer $\vp(\vx)$ (i.e., the implicit solution of $\nabla_{\vx}\mathcal{H}(\vx, \vp) = \vx'$) as a function of $\vx$ yields
\begin{equation}
  \label{eq:132}
  \nabla_{\vx}L[\vx, \vx'] = \pd{\vp}{\vx}\cdot \vx' - \pd{\vp}{\vx}\cdot\nabla_{\vp}\mathcal{H}(\vx, \vp(\vx)) - \nabla_{\vx}\mathcal{H}(\vx, \vp(\vx)).
\end{equation}
Then, it follows from the first equation in \eqref{eq:53} that $\nabla_{\vx}L[\vx, \vx'] = - \nabla_{\vx}\mathcal{H}(\vx, \vp(\vx))$.

The relationship between the Hamiltonian and Lagrangian in \eqref{eq:48} provides the connection between MPPs and the stationary density.
If we want to move the starting point to the stable fixed point, we must take the limit $t_{0}\to -\infty$ as $\vx_{0}\to \vxa$, because $\vxa$ is an unstable saddle point of \eqref{eq:41} (even though it is a stable fixed point of the deterministic system \eqref{eq:28}). 
We also have that $\vp \to 0$ as $\vx_{0} \to \vxa$.
Recall that the deterministic system \eqref{eq:28} is recovered from \eqref{eq:41} by setting $\vp = 0$, which means that deterministic flows exist on the stable manifold of the higher dimensional system.
For any time autonomous stochastic process, the stationary density can be written as
\begin{equation}
  \rmp_{\infty}(n, \vx) = \lim_{t_{0}\to -\infty}\rmp(n, \vx, t | n_{0}, \vxa, t_{0}).
\end{equation}
Since $\mathcal{H}$ is constant along a MPP and $\mathcal{H}(\vx, 0) = 0$, we have that $\mathcal{H}(\vx, \vp) = 0$ along every MPP that starts at $\vxa$.
With this in mind, we can approximate the stationary density by taking the limit $t_{0}\to-\infty$ and $\vx_{0}\to\vxa$ in \eqref{eq:47} to get
\begin{equation}
\label{eq:37}
\begin{split}
  \rmp_{\infty}(n, \vx) \lsim  \explr{-\frac{1}{\epsilon}\int_{-\infty}^{t}\vp(s)\cdot \vx'_{\rm MP}(s) ds} 
\equiv \explr{-\frac{1}{\epsilon}W(\vx)}.
\end{split}
\end{equation}
Hence, the conjugate momentum $\vp$ has the alternative definition, $\vp = \nabla_{\vx}W$, where $W$ is called the {\em quasipotential}.
It follows from \eqref{eq:47} and \eqref{eq:37} that the action determines the quasipotential with $W(\vx(t)) = \act(t)$ along the $\mathcal{H}=0$ MPPs.
Notice that if MPPs are deterministic trajectories (with $\vp = 0$) the quasipotential is flat.
This the case at deterministic fixed points.

An alternative way to derive \eqref{eq:41} is to apply the method of characteristics \cite{ockendon03a} to the static Hamilton--Jacobi equation,
\begin{equation}
  \label{eq:66}
  \mathcal{H}(\vx, \nabla W) = 0.
\end{equation}
As shown in the next section, the above equation arises as the leading order problem in the WKB expansion.
The curves $(\vx(t), \vp(t))$ are called {\em characteristics}, and the lower dimensional curves $\vx(t)$ are called {\em characteristic projections}.

Given initial data parameterized by $\theta$, namely $\vx(0) = \vx_{0}(\theta)$ and $\vp(0) = \vp_{0}(\theta)$, the set of characteristics parameterizes the solution surface $W$ to \eqref{eq:66}.
As with any nonlinear scalar PDE, the method of characteristics can break down if characteristic projections cross.
This corresponds to the solution surface folding over on itself, and some additional constraint is necessary to obtain a unique solution.
For example, if two different characteristic projections, $\vx_{1}(t),\vx_{2}(t)$ cross at some time $T>t_{0}$ so that $\vx_{1}(T) = \vx_{2}(T) = \vx$ and $S_{1}(T) <  S_{2}(T)$, then there are two possible values that the quasipotential can take: $W(\vx) = \act_{1}(T)$ or $W(\vx) = \act_{2}(T)$.
One of the key results of the large deviation principle is the additional contraint to resolve a unique solution.

Recall that a MPP becomes less likely as the action increases.
At a given point $\vx$ where two or more characteristic projections cross at time $T$, we designate the MPP to be the characteristic with the smallest corresponding action $\act(T)$ defined by \eqref{eq:134}.
This is known as the {\em least action principle}.
A {\em caustic} is a curve along which each point is the terminus of two or more MPPs that have equal action.
For the above example, we would set $W(\vx) = \act_{1}(T)$ since $S_{1}(T) < S_{2}(T)$.
If instead we have $S_{1}(T) = S_{2}(T)$ then the point $\vx$ is on a caustic.
Even though $\vx_{1}(T) = \vx_{2}(T)$ and $S_{1}(T) = S_{2}(T)$ at a point on a caustic, it does not necessarily follow that $\vp_{1}(T) = \vp_{2}(T)$, which means that the gradient of the quasipotential is discontinuous across the caustic.

The large deviation formulation is not just useful for establishing the connection between the WKB approximation of the stationary density and MPPs.
The variational aspect of the theory can be used to develop numerical methods.
We use two numerical methods to compute MPPs and the quasipotential.
MPPs are computed using the geometric minimum action method (GMAM) \cite{heymann08a}, which yields a numerical approximation of $\vx(t), \vp(t), \act(t)$ between two given points. (Note that the approximation is discretized by arclength in $(v, w)$ and not by time).

It is often desirable to have a more global view of the quasipotential.
Instead of approximating solutions to \eqref{eq:41} like the GMAM, a numerical finite diference scheme can be used to approximate the solution to the static Hamilton--Jacobi equation \eqref{eq:66}.
To compute the quasipotential at points on a discrete grid in the $\vx$ plane, we have developed an {\em ordered upwind method} (OUM), the details of which can be found in Appendix \ref{sec:app_oum}.
One of the key advantages of the OUM is that it naturally resolves caustics by updating grid points in order of increasing quasipotential.

\subsection{WKB approximation}
\label{sec:wkb}
The WKB method is a well known means of approximating the stationary density.
It is also a practical way to calculate the Hamiltonian $\mathcal{H}$.
By conditioning on the number of open \Nat~channels $n$, we can decompose the stationary density with $\rmp_{s}(n, \vx) = r(n| \vx)u(\vx)$, where $u(\vx) \equiv \sum_{n=0}^{N}\rmp_{s}(n, \vx)$ is the marginal density function and $r(n| \vx)$ is the steady-state conditional distribution for $n$ given $\vx$.

Motivated by \eqref{eq:37}, we assume the stationary solution to \eqref{eq:34} has the form
\begin{equation}
  \label{eq:40}
  \rmp_{s}(n, \vx) \sim \mathcal{N} \left[r_{0}(n | \vx) + \epsilon r_{1}(n| \vx)\right]k(\vx)\explr{-\frac{1}{\epsilon}W(\vx)},
\end{equation}
where $\mathcal{N}$ is a normalization constant, $W(\vx)$ is the quasipotential, and $k(\vx)$ is called the {\em pre exponential factor}.
It follows that
\begin{equation}
  \label{eq:148}
  r(n| \vx) \sim r_{0}(n| \vx) + \epsilon r_{1}(n|\vx), \quad u(\vx) \sim \mathcal{N}k(\vx)\explr{-\frac{1}{\epsilon}W(\vx)}.
\end{equation}
At leading order, the WKB expansion determines $r_{0}$ and $\mathcal{H}$, which determines the quasipotential $W$.
The pre exponential factor is determined at higher order (see Appendix \ref{sec:prefactor}).

Substituting \eqref{eq:40} into \eqref{eq:34} and collecting leading order terms yields
\begin{equation}
  \label{eq:44}
  \left[\frac{1}{\hgamna}\mathbb{L}_{\Na} + \pv \diag{\rf(\vx, n)} + h(\vx, \pw)\mathbb{I}\right]r_{0}(n | \vx) = 0,
\end{equation}
where
\begin{equation}
  \label{eq:45}
       h(\vx, \pw) = \frac{\beta_{\K}}{\hgam}\sum_{j=\pm}(e^{-j\hgam \pw}-1)\Omega^{\pm}_{\K}(Mw|v)/M.
\end{equation}
Hence, $r_{0}$ is the appropriately normalized nullvector of the discrete operator,
\begin{equation}
  \label{eq:46}
  \mathbb{M}(\vp) \equiv \frac{1}{\hgamna}\mathbb{L}_{\Na} + \pv \diag{\rf(\vx, n)} + h(\vx, \pw)\mathbb{I},
\end{equation}
which can be rewritten as a matrix.
In order for the WKB solution to be non negative, the nullvector must also be non negative.
Using the Perron--Frobenius Theorem, one can show (see Lemma \ref{lem:PF} in Appendix \ref{sec:app_pathequiv}) that $\mathbb{M}$ has a unique positive eigenvector corresponding to a real, simple eigenvalue that is greater than the real part of all other eigenvalues.
One can also show that the Perron eigenvalue (see Appendix \ref{sec:app_path}) is the Hamiltonian $\mathcal{H}$ that defines the Lagrangian in \eqref{eq:48}, which means that characteristics are MPPs.
Hence, we rewrite \eqref{eq:44} as the eigenvalue problem,
\begin{equation}
  \label{eq:51}
  \left[\mathbb{M}(\vp) - \mathcal{H}(\vx, \vp)\mathbb{I}\right]q(n, \vp) = 0,
\end{equation}
with $q(n, \vp) > 0$ for all $n$ and $\vp$.
Equation \eqref{eq:51} is equivalent to \eqref{eq:44} when $\mathcal{H}(\vx, \vp) = 0$, which is the Hamilton--Jacobi equation \eqref{eq:66} derived in the previous section!
Given $\vp(\vx)$ such that $\mathcal{H}(\vx, \vp(\vx)) = 0$, we have that $r_{0}(n | \vx) = q(n, \vp(\vx))$.

To calculate $\mathcal{H}$ we use the anzatz,
\begin{equation}
  \label{eq:49}
  q(n) = A^{n}/(n!(N-n)!),
\end{equation}
where $A$ is an unknown that must be determine self consistently.
Substituting \eqref{eq:49} into \eqref{eq:51} yields
\begin{multline}
  \label{eq:50}
  \left[-A^{2} - A\left(1 - a - \frac{\hgamna\pv}{N}f_{\Na}\right)+ a_{\Na}\right]n  \\
+ \frac{A}{N}\left[A - a_{\Na} + \frac{\hgamna}{N}(\pv g + h - \mathcal{H})\right] = 0.
\end{multline}
Setting 
\begin{equation}
  \label{eq:52}
    A  = a_{\Na}(v) - \frac{\hgamna}{N}(\pv g(\vx) + h(\pw) - \mathcal{H}(\vx, \vp)),
\end{equation}
eliminates the $n$ independent term.
The remaining $n$ dependent term becomes
\begin{multline}
  \label{eq:117}
  (h - \mathcal{H})^{2}  + \left[(2g + f_{\Na})\pv - \frac{N}{\hgamna (1-x_{\infty})} \right] (h - \mathcal{H}) \\
+  (f_{\Na} + g)g\pv^{2} - \frac{N(x_{\infty}f_{\Na} + g)}{\hgamna(1-x_{\infty})}\pv = 0,
\end{multline}
where we use $x_{\infty} = a_{\Na}/(1 + a_{\Na})$.
Hence, the solvability condition \eqref{eq:50} is satisfied by setting
\begin{equation}
\label{eq:63}
  \mathcal{H}(\vx, \vp) = h(\vx, \pw)  + \frac{1}{2}\left[\frac{N}{\hgamna (1-x_{\infty})}  - (2g + f_{\Na})\pv\right] + \frac{1}{2}z(\pv)^{\frac{1}{2}},
\end{equation}
where
\begin{equation}
  \label{eq:65}
  h(v, w, \pw) = \frac{\beta_{\K}}{\hgam}\sum_{j=\pm}(e^{-j\hgam \pw}-1)\Omega^{\pm}_{\K}(Mw|v)/M,
\end{equation}
and
\begin{equation}
\label{eq:64}
\begin{split}
    z(\pv) &\equiv \left[(2g + f_{\Na})\pv - \frac{N}{\hgamna (1-x_{\infty})} \right]^{2} \\
&\quad- 4\left[ (f_{\Na} + g)g\pv^{2} - \frac{N(x_{\infty}f_{\Na} + g)}{\hgamna(1-x_{\infty})}\pv\right].
\end{split}
\end{equation}

It is worth pausing here to mention that for general stochastic hybrid weak noise problems, it is not always possible to solve the characteristic equation to obtain the Hamiltonian.
In Ref.~\cite{newby13b}, a different Hamiltonian was derived and used to generate MPPs by setting $\mathcal{H} = 0$ in the characteristic equation \eqref{eq:117}.
An alternative to defining the Hamiltonian as the Perron eigenvalue is to define it as the determinant of the matrix obtained from the WKB method (e.g., \eqref{eq:46}).
Call the alternative Hamiltonian $\widehat{{\cal H}}$.
While it immediately follows that this alternative formulation defines the same gradient $\vp$ of the quasipotential $W$, it is not obvious that the resulting characteristic projections are MPPs.
However, one can show that the characteristics from $\widehat{{\cal H}}(\vx, \vp) = 0$ trace the same path in $\vx, \vp$ and differ from the characteristics of $\mathcal{H}(\vx, \vp) = 0$ by a time scale.
We leave the details to Appendix \ref{sec:app_pathequiv}.
The problem with using $\widehat{\mathcal{H}}$ is that it is not a convex function of $\vp$, which is a essential property for the GMAM and OUM numerical algorithms.
In the context of stochastic hybrid systems (such as the stochastic ML model), the WKB method by itself does not uniquely define a Hamiltonian, but this can be resolved using a large deviation principle as shown in Appendix \ref{sec:app_path}.
Now that we have identified how to formulate the Hamiltonian using the WKB approximation, it can be used as a stand-alone method.

Given the solution to \eqref{eq:66}, the conditional distribution $r_{0}$ is given by
\begin{equation}
  \label{eq:67}
  r_{0}(n | \vx) = \binom{N}{n}\Lambda(\vx)^{n}(1-\Lambda(\vx))^{N-n},
\end{equation}
where
\begin{equation}
  \label{eq:97}
 \Lambda(\vx) = \frac{A(\vx)}{1+A(\vx)},\quad A(\vx) = a_{\Na}(v) - \frac{\hgamna}{N}(\pv(\vx) g(\vx) + h(\vx, \pw(\vx))).
\end{equation}

\subsubsection{Gaussian approximation near a stable fixed point}
\label{sec:linear-theory}
Near a stable fixed point, a weak noise stochastic process behaves like a continuous Markov process.
Approximating a stochastic process by a simpler continuous Markov process is called a diffusion approximation.
Two widely used methods for obtaining these approximations are the system-size expansion \cite{kampen07a} and a stochastic quasi-steady-state reduction \cite{keener11a}.
It is easy to verify by applying the WKB method to the Fokker--Plank equation that the Hamiltonian for continuous Markov process must be a quadratic function of $\vp$.
Unfortunately, the diffusion approximation breaks down for metastable events like spontaneous excitability.
However, the approximation plays an important role in numerical algorithms (see Appendix \ref{sec:app_oum}).
Moreover, since $\vp = 0$ at fixed points, the linear behavior of Hamilton's equations \eqref{eq:41} near a fixed point is identical to the diffusion approximation.

At the stable fixed point $\vxa$, the stationary probability density function is sharply peaked and is approximately Gaussian.
To see this, Taylor expand the quasipotential near $\vxa$ (with $W(\vxa) = 0$) to get
\begin{equation}
  \label{eq:83}
  W(\vx)\sim   \frac{1}{2}(\vx - \vxa)^{T}Z(\vxa)(\vx - \vxa ) + \cdots,
\end{equation}
where $Z$ is the Hessian matrix,
\begin{equation}
\label{eq:58}
  Z_{s, s'} \equiv \frac{\partial^{2}W}{\partial s \partial s'}.
\end{equation}
On characteristics $Z$ satisfies the Ricatti equation \cite{ludwig75a,maier97a}
\begin{equation}
\label{eq:38}
  \frac{d}{dt}Z = -ZDZ - ZC - C^{T}Z - G,
\end{equation}
where
\begin{equation}
\label{eq:42}
  D_{s, s'}(\vx, \vp) = \frac{\partial^{2}\mathcal{H}}{\partial p_{s}\partial p_{s'}},\quad
  C_{s, s'}(\vx, \vp) = \frac{\partial^{2}\mathcal{H}}{\partial p_{s}\partial s'},\quad
  G_{s, s'}(\vx, \vp) = \frac{\partial^{2}\mathcal{H}}{\partial s\partial s'}.
\end{equation}
Characteristics converge to stable (unstable) fixed points in the limit $t\to-\infty$ ($t\to\infty$).
Therefore, at these points we have that $Z'  = 0$.  
Furthermore, we know that $\vp = 0$ at fixed points and that $\mathcal{H}(\vx, 0) = 0$.
Hence, at the fixed point \eqref{eq:38} becomes
\begin{equation}
  \label{eq:88}
   ZDZ + ZC + C^{T}Z  = 0, 
\end{equation}
which is called the {\em algebraic Ricatti equation}.
The solution to this equation yields a Gaussian approximation of the local stationary probability density, which suggests that the process behaves like a continuous Markov process near the stable fixed point.

To see the connection between the stochastic ML model and a diffusion approximation we first expand $\mathcal{H}$ near the stable fixed point.
Because $\vp = 0$ at fixed points, \eqref{eq:88} can be simplified by expanding $\mathcal{H}$ around $\vp = 0$ and $\vx = \vxa$.
Expanding to second order in $\vp$ and $\vx-\vxa$ yields
\begin{equation}
  \label{eq:90}
  \widetilde{\mathcal{H}}(\vx, \vp) \equiv  \sum_{s = v, w}p_{s}\sum_{s' = v, w}(s' - s'_{A})C_{s,s'}(\vxa , 0) 
+ \frac{1}{2}\sum_{s = v, w}p_{s}^{2}D_{s, s}(\vxa , 0).
\end{equation}
Note that many terms vanish at the fixed point, namely, $\mathcal{H}(\vxa, 0) = 0$, $\nabla_{\vx}\mathcal{H}(\vxa, 0) = \nabla_{\vp}\mathcal{H}(\vxa, 0) = 0$, and $G_{s, s'}(\vxa, 0) = 0$.
The above is consistent with the Hamiltonian of a continuous Markov process corresponding to the Fokker--Plank equation,
\begin{multline}
  \label{eq:62}
  \pd{}{t}\tilde{\rmp}(\vx, t) =  -\sum_{s = v, w}\pd{}{s}\left[\sum_{s' = v, w}(s' - s'_{A})C_{s, s'}(\vxa , 0)\tilde{\rmp}(\vx, t)\right] \\
+ \frac{\epsilon}{2}\sum_{s = v, w}\pdd{}{s}\left[D_{s, s}(\vxa , 0)\tilde{\rmp}(\vx, t) \right].
\end{multline}
It is a useful exercise to show that a WKB expansion of \eqref{eq:62} results in the Hamiltonian \eqref{eq:90}.

\section{Results}
\subsection{Type I excitability}
\label{sec:sp1}
\subsubsection{Spontaneous action potentials}
Consider the parameter regime where the deterministic system has three fixed points: one stable, one saddle, and one unstable.
(Parameter values are listed in Appendix \ref{sec:type-i}.)
A phase plane diagram of the deterministic dynamics is shown in Fig.~\ref{fig:type1a_pp1}(a).
\begin{figure}[tbp]
  \centering
  \includegraphics[width=10cm]{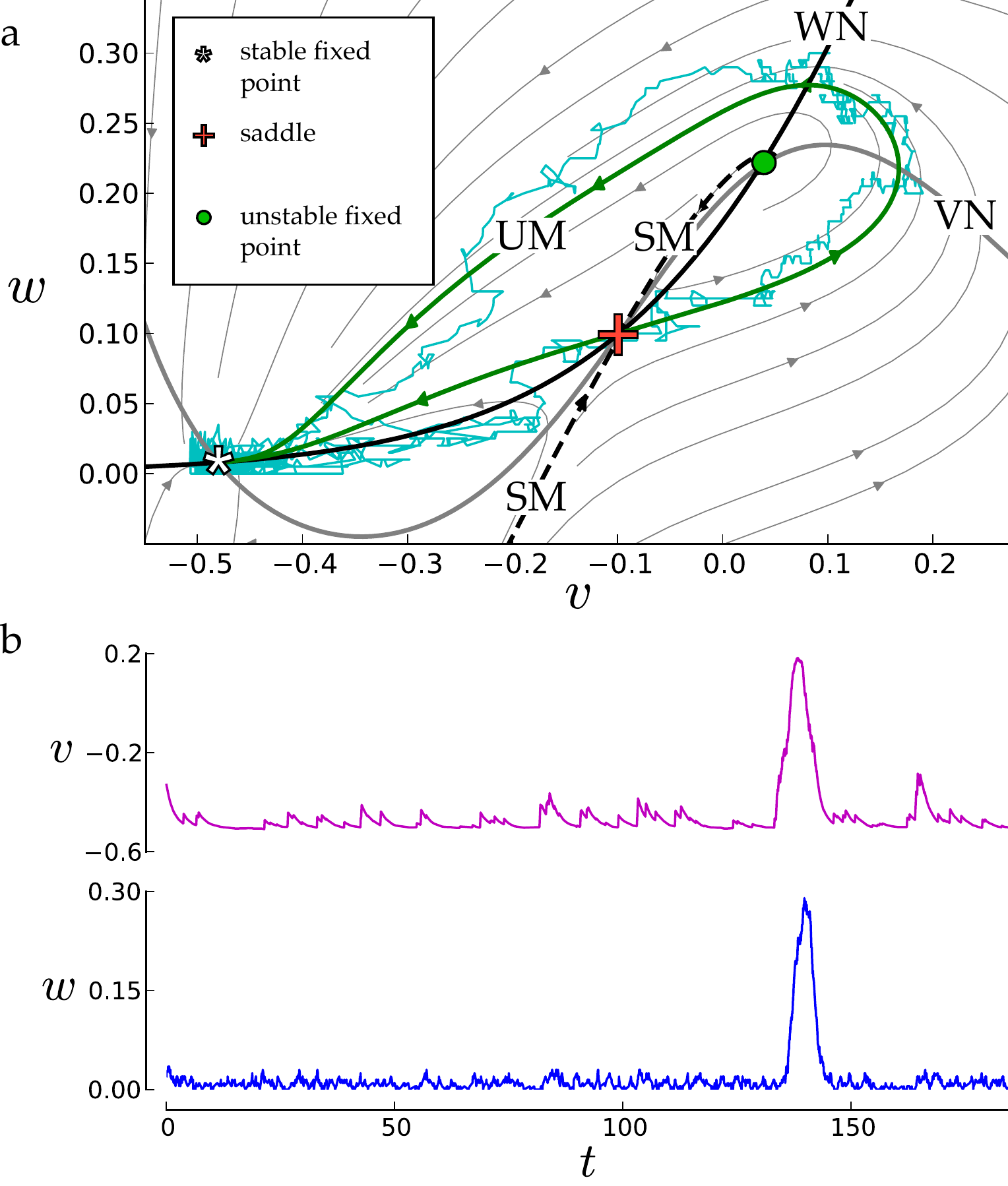}
  \caption{(a) Deterministic phase plane for type I excitability. Streamlines of the deterministic vector field are shown as thin grey curves.  A representative stochastic trajectory of an excitable event is shown in blue. Also labeled in the figure are the $v$-nullcline (VN),  $w$-nullcline (WN), stable manifold of the saddle (SM), and unstable manifold of the saddle (UM). (b) Representative time dependent stochastic trajectory of an excitable event.}
  \label{fig:type1a_pp1}
\end{figure}
The stable manifold of the saddle defines a threshold for excitation.
A deterministic trajectory starting to the left of the threshold quickly converges to the stable fixed point.
On the other hand, when starting to the right of the threshold, the trajectory exhibits a transient increase in voltage as it travels around the unstable fixed point before reaching the stable fixed point.
Hence, a noise induced action potential can be broken into two phases: a slower initiation phase and a faster transient spike in voltage followed by a return to the stable fixed point.
The initiation phase is a fluctuation-induced spontaneous transition from the stable fixed point to the threshold.
Once the threshold is reached, the return to the stable fixed point is dominated by the deterministic forces rather than ion channel fluctuations.
The most likely path taken during the return phase is along one of the two branches of the unstable manifold (see the green curve Fig.~\ref{fig:type1a_pp1}(a)).
The right branch leads to an excitable event and the left branch leads directly back to the stable fixed point.
Fig.~\ref{fig:type1a_pp1}(b) shows a representative stochastic trajectory obtained by simulation (see Appendix \ref{sec:monte-carlo-simul}).

\begin{figure}[tbp]
  \centering
  \includegraphics[width=10cm]{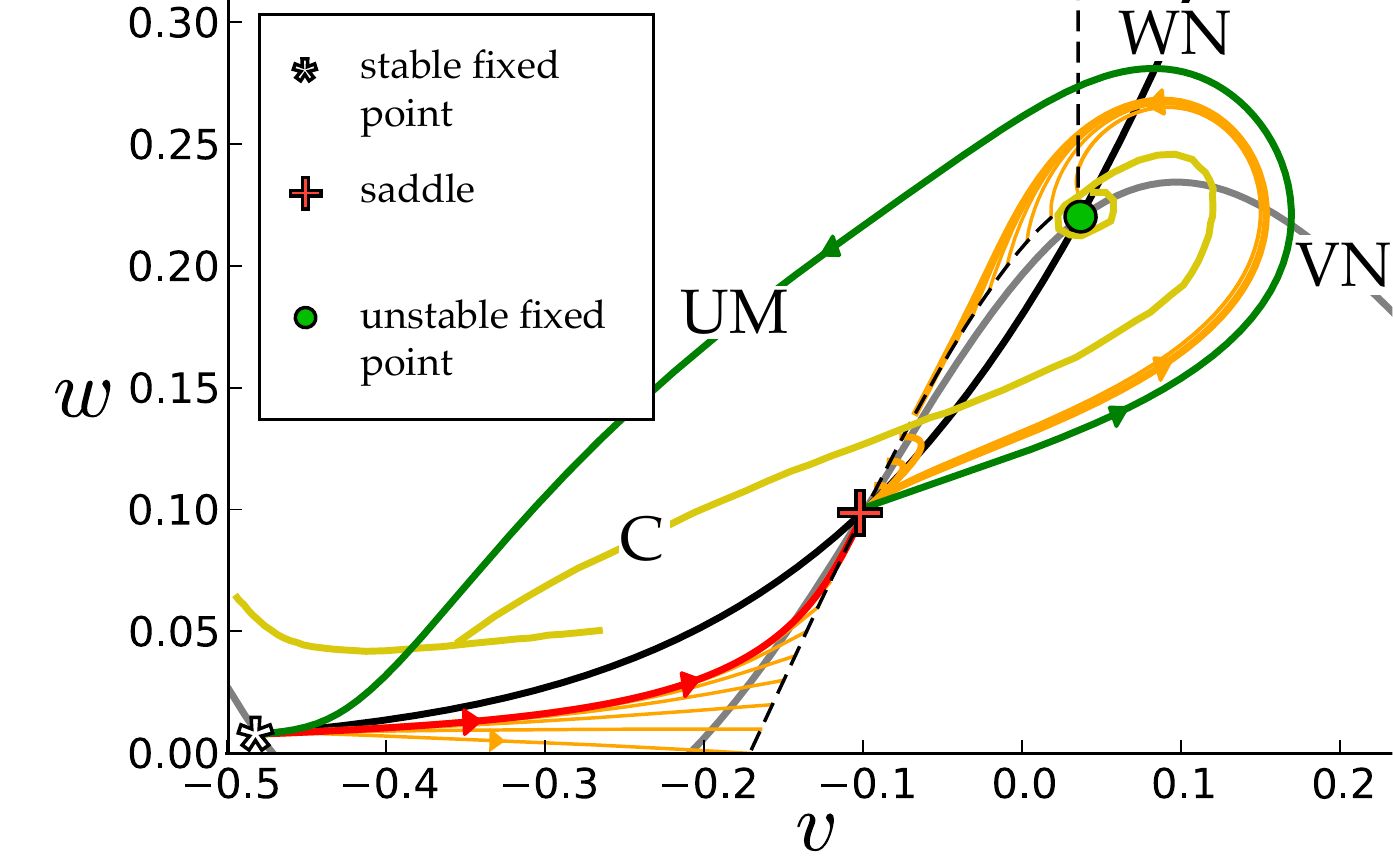}
  \caption{Type I excitability, showing MPPs that start at the stable fixed point and end at the threshold (dashed curve). This red line shows the MPEP that connects the stable fixed point to the saddle.  Multiple orange curves show MPPs that reach other points on the threshold. Also shown are the caustics (C) where characteristic projections (not shown) collide.}
  \label{fig:type1a_pp2}
\end{figure}
\begin{figure}[tbp]
  \centering
  \includegraphics[width=12cm]{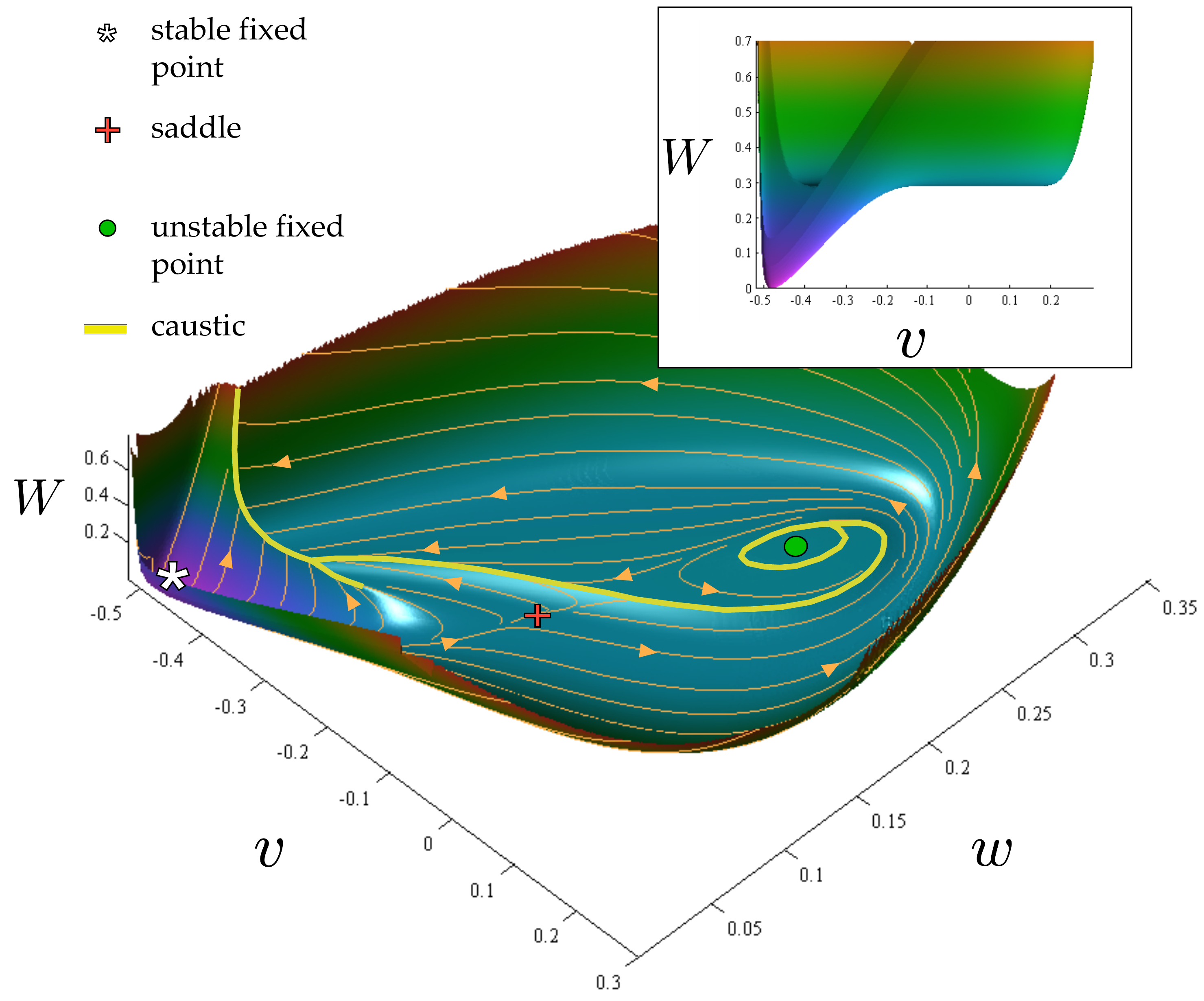}
  \caption{Type I excitability, showing the quasipotential $W(v, w)$ computed using the OUM on a $500\times500$ grid.  Yellow curve shows the caustic.  Orange streamlines show the behavior of MPPs.}
  \label{fig:type1a_qp}
\end{figure}
Fig.~\ref{fig:type1a_pp2} shows a several MPPs (orange curves, computed using the GMAM) that start at the stable fixed point.
MPPs can only directly lead to the threshold (dashed line) below the saddle, where they cross from left to right.
Above the saddle, MPPs first travel through the saddle and then cross the threshold from right to left.
Hence, MPPs that cross the threshold above the saddle are most likely to be returning action potential trajectories.
That is, action potential trajectories at the end of the excitation phase that are returning to the stable resting potential.
Spontaneous action potentials are most likely initiated below the saddle.

Above the stable fixed point many characteristic projections overlap.
Uniqueness of the solution is recovered using the minimum action principle (see Section \ref{sec:mpp}).
At any given point $\vx$ through which two or more characteristics cross, the value of $W(\vx)$ is given by the characteristic that has the smallest action.
A caustic is a curve along which each point is the terminus of two ore more MPPs and the gradient of the quasipotential is discontinuous.
Fig.~\ref{fig:type1a_pp2} shows two branches (yellow curves) of the caustic.
To the left of the threshold, transient excursions from the attractor reach the caustic from below, while returning action potentials reach the caustic from above.
The upper branch of the caustic crosses the separatrix above the saddle, separating excitation-phase MPPs into those that travel around the unstable fixed point and those that do not.

The quasipotential, computed using the OUM, is shown in Fig.~\ref{fig:type1a_qp}.
As there are two phases to an excitable event, there are two regions of the quasipotential, separated by the threshold.
Around the stable fixed point is a potential well from which the trajectory must escape during the initiation phase.
Around the unstable fixed point, to the right of the threshold, the quasipotential forms a horseshoe-canyon-like shape (see Fig.~\ref{fig:type1a_qp_2}), the bottom of which lies flat along the unstable manifold (see green curve in Fig.~\ref{fig:type1a_pp2}).
\begin{figure}[htbp]
  \centering
  \includegraphics[width=10cm]{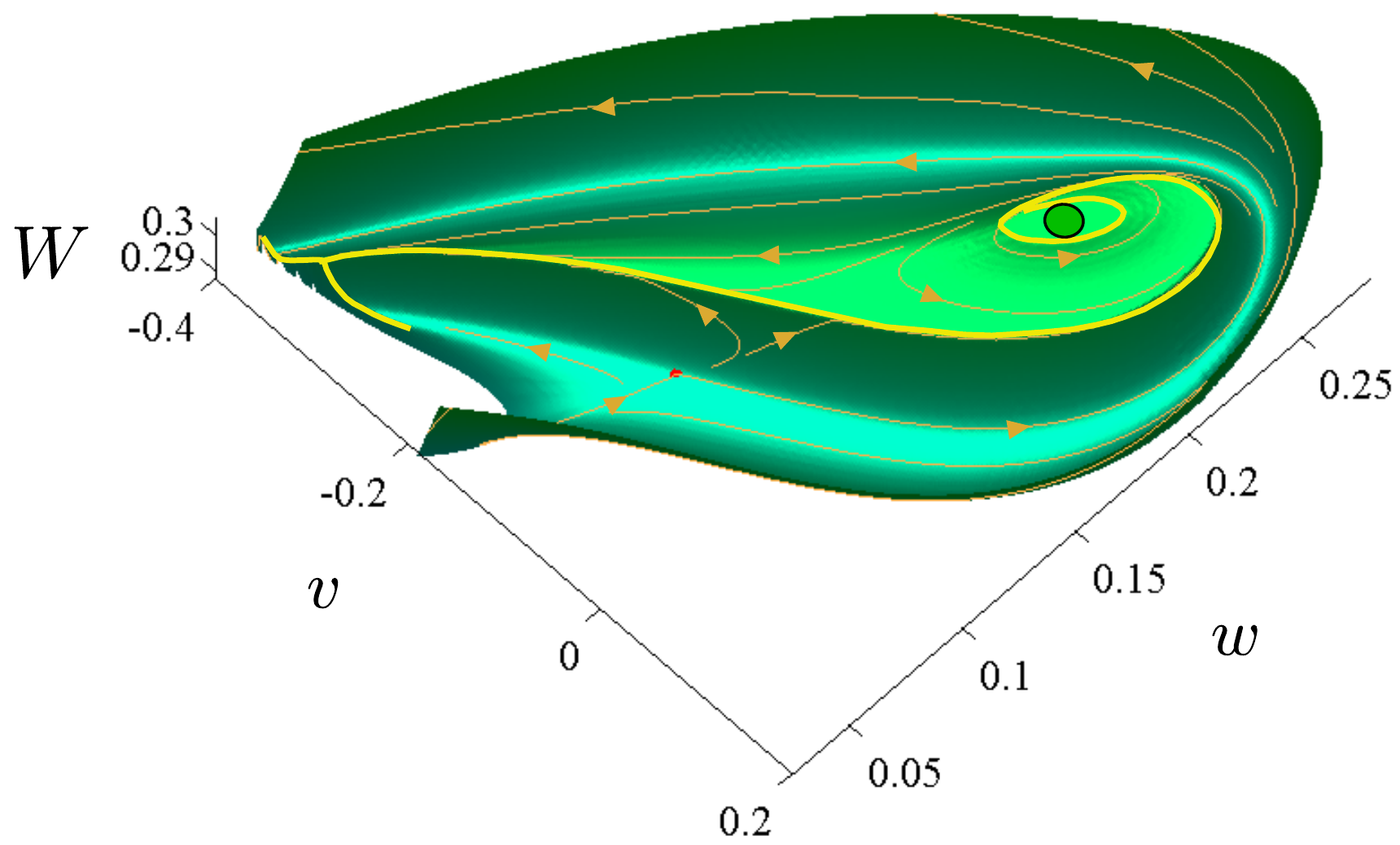}
  \caption{A close up of the quasipotential from Fig.~\ref{fig:type1a_qp} around the unstable fixed point.}
  \label{fig:type1a_qp_2}
\end{figure}
Hence, the most probable MPP for the excitation phase is the deterministic unstable manifold.

While the unstable manifold wraps around the unstable fixed point only once, a stochastic trajectory can rotate around the unstable fixed point during the excitation phase, prolonging the action potential.
The situation can be more pronounced if there is a stable limit cycle surrounding the unstable fixed point as shown in the next section.

\subsubsection{Spontaneous bursting}
By slightly changing parameters, a stable limit cycle can emerge around the unstable fixed point corresponding to the excited state.
(Parameter values are listed in Appendix \ref{sec:type-i-with}.)
The phase plane is shown in Fig.~\ref{fig:type1b_pp1}(a).
\begin{figure}[tbp]
  \centering
  \includegraphics[width=10cm]{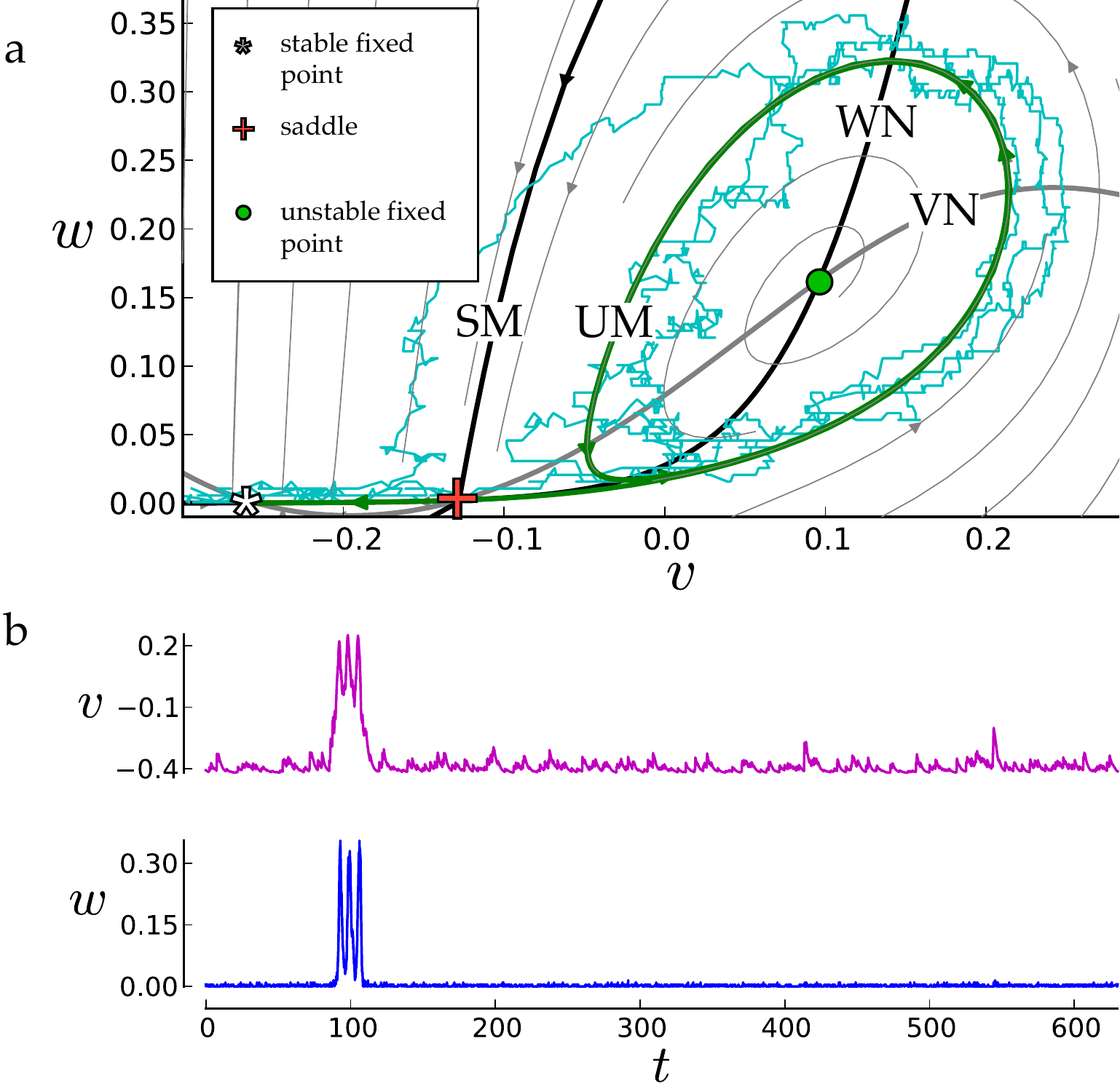}
  \caption{Type I excitability with bursting, showing (a) the deterministic phase plane. Streamlines of the deterministic vector field are shown as thin grey curves.  A representative stochastic trajectory of an excitable event is shown in blue.  (b) Representative time dependent stochastic trajectory of an excitable event.}
  \label{fig:type1b_pp1}
\end{figure}
The unstable manifold (green curve) converges to a stable limit cycle surrounding the unstable fixed point.
Also shown is a representative stochastic trajectory, Fig.~\ref{fig:type1b_pp1}(b).

With a stable limit cycle surrounding the unstable fixed point, the excited state is no longer transient, and escape back to the resting state is also a spontaneous event.
Hence, a single excitable event is characterized by two spontaneous events occurring in sequence: spontaneous initiation and escape from the excited state back to the resting state.
If the average duration of an excited state is much less than average time for spontaneous initiation, a single excitable event can be described as spontaneous bursting.
In contrast to deterministic bursting where the number of bursts is fixed, the duration of the burst is also random.

Fig.~\ref{fig:type1b_pp2} shows MPPs from the stable fixed point to the saddle (red) and from the stable limit cycle to the saddle (blue).
\begin{figure}[tbp]
  \centering
  \includegraphics[width=10cm]{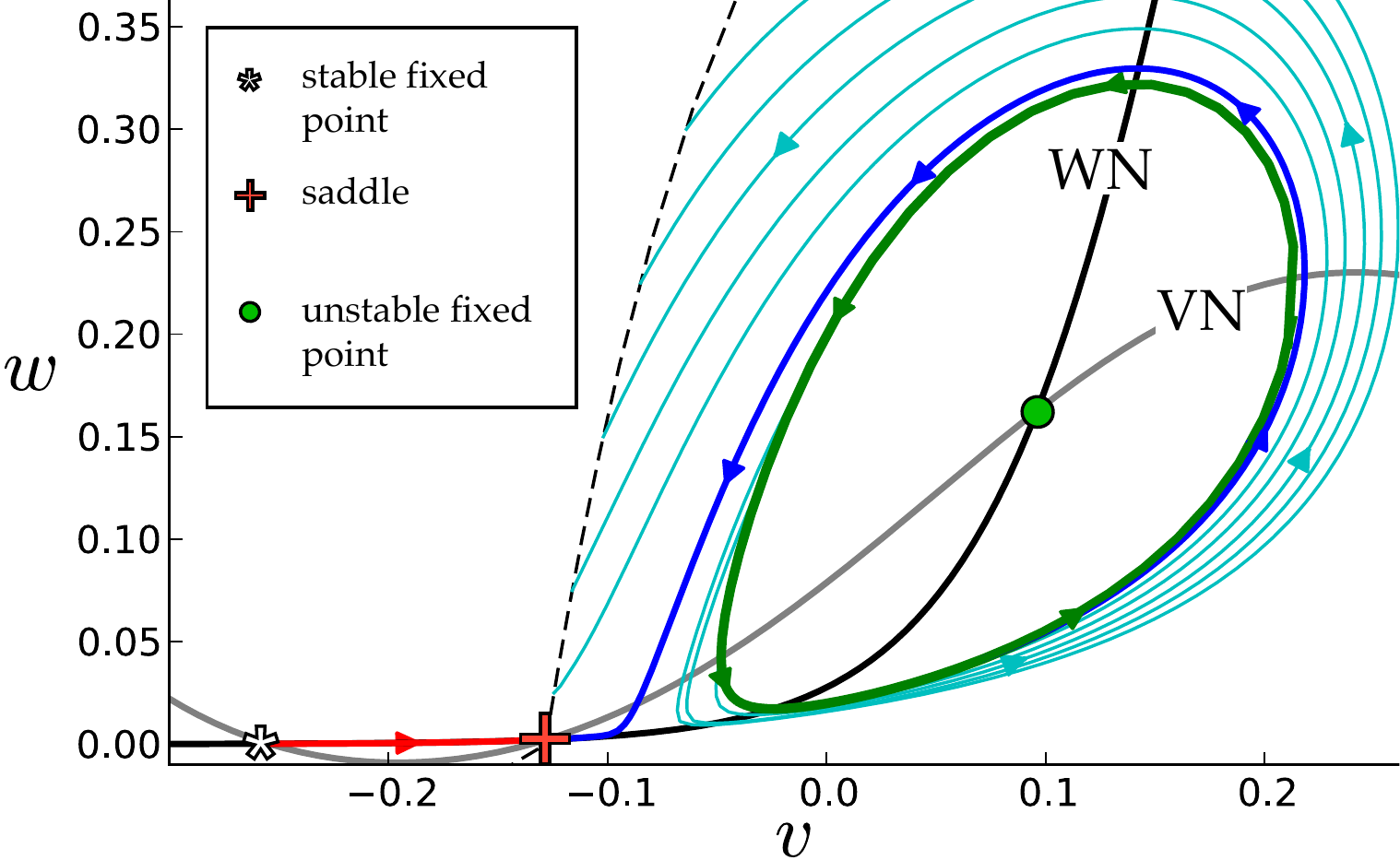}
  \caption{Type I excitability with bursting, showing MPPs that start at the stable fixed point and reach the threshold (dashed curve). This red line shows the MPP that connects the stable fixed point to the saddle. Note that MPPs from the stable fixed point do not cross the saparatrix above the saddle. The blue line shows the MPP that connects the stable limit cycle (green) with the saddle (S).  Light blue curves show MPPs from the stable limit cycle that cross the saparatrix above the saddle.}
  \label{fig:type1b_pp2}
\end{figure}
\begin{figure}[tbp]
  \centering
  \includegraphics[width=12cm]{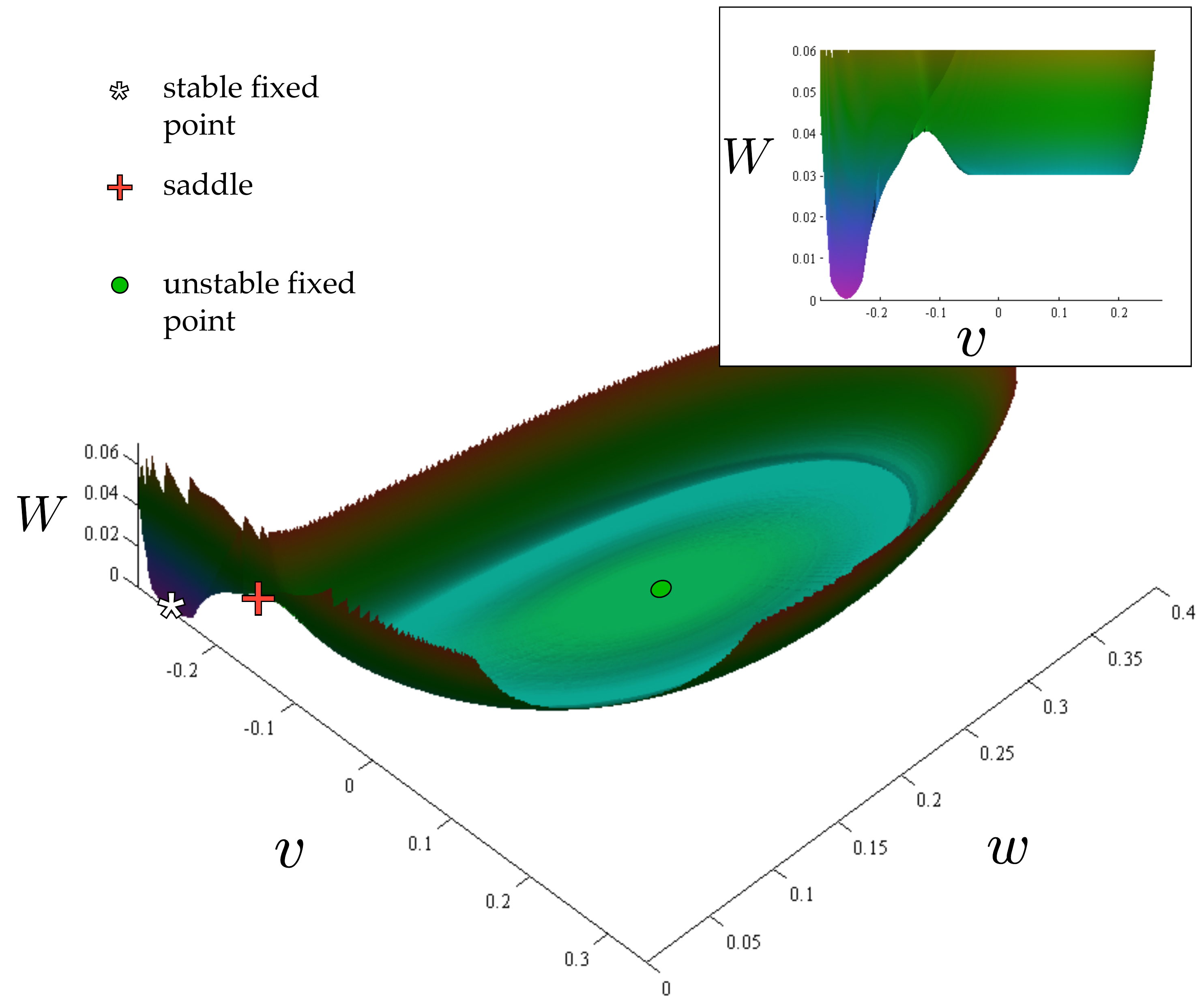}
  \caption{Type I excitability with bursting, showing the quasipotential $W(v, w)$ computed using the OUM on a $500\times500$ grid.}
  \label{fig:type1b_qp}
\end{figure}
In this case, we have that $\mu > 1$, where $\mu$ is the eigenvalue ratio at the saddle.
Hence the MPEP is not tangent to the threshold as in the previous example.
Like the previous example, MPPs that start at the stable fixed points reach the threshold below the saddle only, while MPPs that start from the limit cycle (light blue) cross the threshold above the saddle only.
Notice that there is an effective reflecting barrier near the saddle along $w = 0$ so that all of the MPPs that reach the threshold from the stable fixed point are very close.
We can expect that the exit behavior of the initiation event is nearly one dimensional as $w$ is approximately fixed along the red curve.

The quasipotential is shown in Fig.~\ref{fig:type1b_qp}.
The front view (Fig.~\ref{fig:type1b_qp}(inset)) shows that the quasipotential has the profile of a double well potential, with a local minimum of the left potential well at the stable fixed point and a local maximum at the saddle.
However, the profile appears flat at the bottom of the right potential well, corresponding to the excited state, because of the stable limit cycle.
Notice that the left well is very thin (with respect to $w$), confirming that $w$ is approximately constant during the initiation phase.

The right well takes a shape similar to the bottom of a wine bottle (see Fig.~\ref{fig:type1b_qp_2}).
\begin{figure}[htbp]
  \centering
  \includegraphics[width=10cm]{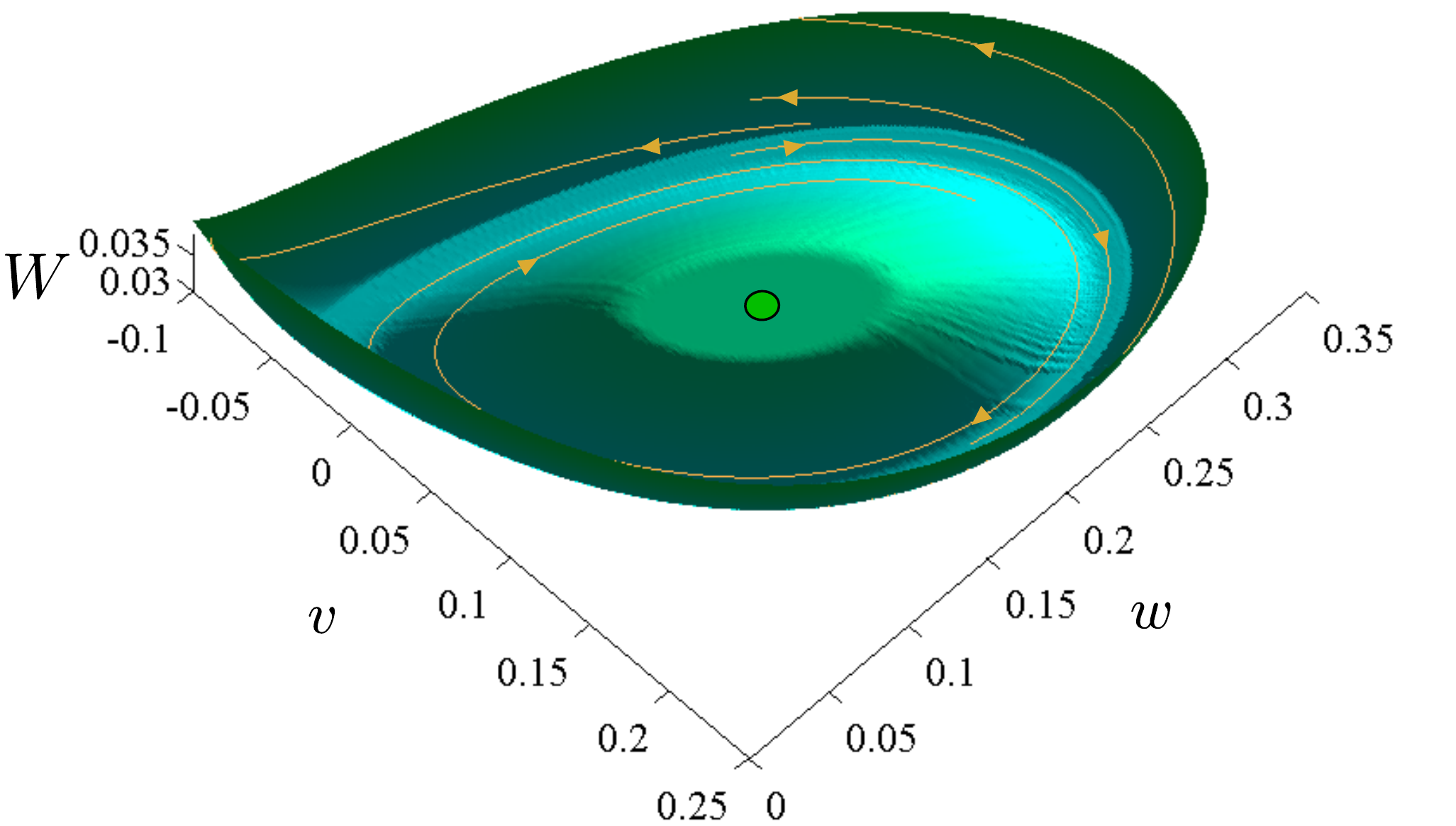}
  \caption{A close up of the quasipotential from Fig.~\ref{fig:type1b_qp} around the unstable fixed point.}
  \label{fig:type1b_qp_2}
\end{figure}
Along the deterministic limit cycle $\vx_{c}(t)$, the quasipotential is constant with $W(\vx_{c}(t)) \equiv W_{c}$.
For $\vx$ in a neighborhood of $\vx_{c}$, the quasipotential is increasing so that $W_{c} \leq W(\vx)$.
On the interior of the limit cycle, $W$ has a local maximum at the unstable fixed point.

\subsection{Type II excitability}
\label{sec:sp2}
Excitability is also possible with a single stable fixed point and no unstable fixed points.
(Parameter values are listed in Appendix \ref{sec:type-ii}.)
Fig.~\ref{fig:type2a_pp1} shows the deterministic phase plane for a type II excitable system.
Also shown is a representative stochastic trajectory that starts at the stable fixed point and later undergoes an excitable event.
Similar to type I spontaneous excitability, the stochastic trajectory spends a long period of time near the stable fixed point until a it undergoes an initiation event.
After initiation, it follows close to a deterministic trajectory until it returns to the stable fixed point.

In contrast to the previous examples, it is difficult to define a threshold without an unstable saddle point which defines a separatrix.
\begin{figure}[tbp]
  \centering
  \includegraphics[width=10cm]{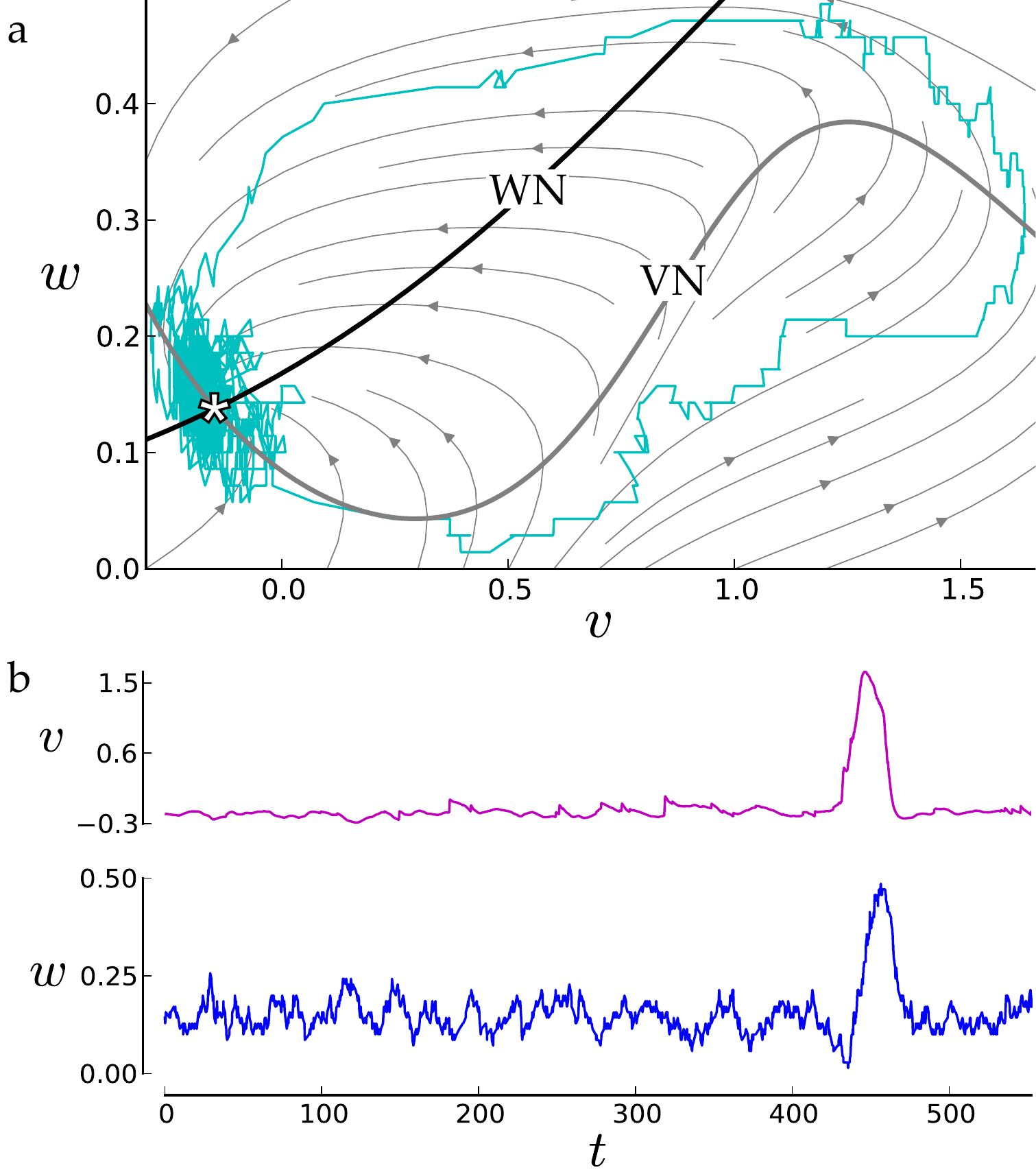}
  \caption{(a) Deterministic phase plane for type II excitability. Streamlines of the deterministic vector field are shown as thin grey curves.  A representative stochastic trajectory of an excitable event is shown in blue.  (b) Representative time dependent stochastic trajectory of an excitable event.}
  \label{fig:type2a_pp1}
\end{figure}
For the deterministic limit, a threshold can be defined in certain limits \cite{khovanov13a}, but there is no general structural definition and often ad-hoc excitation thresholds are defined.
However, a stochastic analysis of MPPs shows that the excitable system with one fixed point shares many features in common with the system with three fixed points.

MPPs that lead to spontaneous action potentials are shown in Fig.~\ref{fig:type2a_pp2}.
\begin{figure}[tbp]
  \centering
  \includegraphics[width = 10cm]{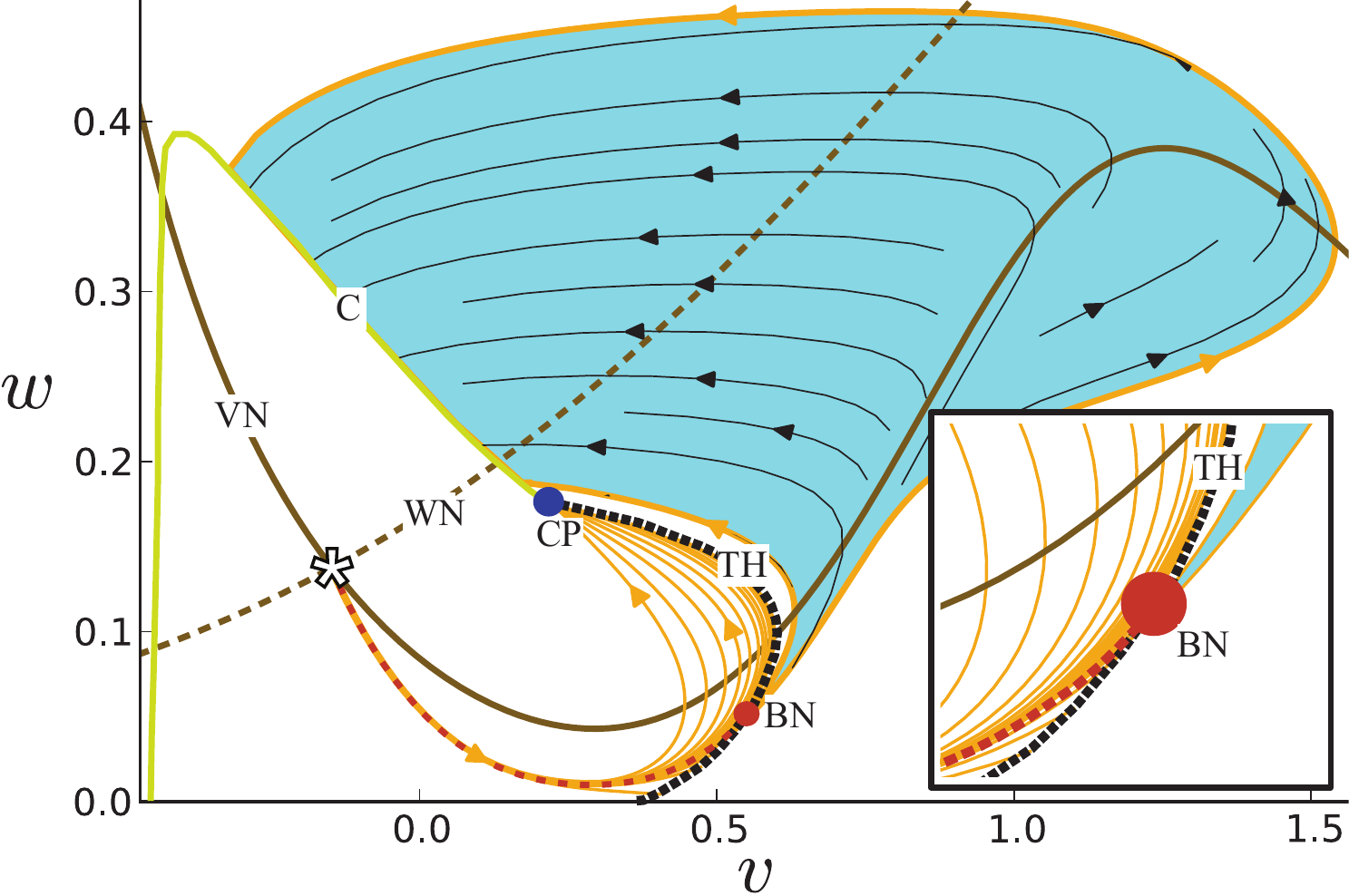}
  \caption{Type II excitability, showing MPPs (orange curves) during initiation. The dashed black curve (TH) shows a level curve of $W$ that reaches the caustic formation point (CP). The blue region contains the most observable action potential trajectories during the excitation phase.  The black stream lines show the deterministic dynamics. During the initiation phase, MPPs follow a single path (dashed red line) and cross through a bottleneck (BN) before reaching the excitation phase (blue region).}
  \label{fig:type2a_pp2}
\end{figure}
The blue region, which corresponds to a region where $W$ is relatively flat, contains the most likely observable action potential trajectories during the excitation phase.
A threshold (dashed black line TH in Fig.~\ref{fig:type2a_pp2}) is defined as the level curve of $W$ passing through the caustic formation point.
All of the MLTs that eventually cover the blue region start out very close together during the initiation phase and cross the threshold very close to a single point called the bottleneck (see Fig.~\ref{fig:type2a_pp2} inset).
In other words, the most observable spontaneous action potential trajectories follow a single path (red dashed curve) during initiation.
We expect to see this in a type I system because of the saddle, and it is interesting that this behavior is preserved in a type II system.
The bottleneck acts like a saddle even though the Type II system has no saddle.

If MPPs asymptotically approach deterministic trajectories, it must be in the limit $t\to\infty$.
In a type I excitable system, this happens at the end of the initiation phase as the MPEP approaches the stable manifold of the saddle.
On the other hand, for a type II excitable system there is no saddle, and asymptotic convergence to a deterministic trajectory does not occur until the end of the return phase when it reaches the stable fixed point.
However, as is shown in Fig.~\ref{fig:type2a_pp2}, after MPPs leave the potential well region, they become very close to deterministic trajectories.
This can be quantified by observing that $\norm{\vp}$ is small (but not zero) after leaving the potential well region.
Unlike the type I example, a stochastic trajectory, once it reaches the boundary of the potential well, is not equally likely to continue to become an action potential or return directly to the stable fixed point.
However, since $\norm{\vp}$ is small, the probability of returning directly to the stable fixed point is slightly higher than to generate an action potential.
Hence, the mean time to exit the potential well region can be viewed as a lower bound on the mean time to initiate an action potential, the former being less than half the value of the latter.

The quasipotential, computed numerically, is shown in Fig.~\ref{fig:type2a_qp}.
\begin{figure}[tbp]
  \centering
  \includegraphics[width=12cm]{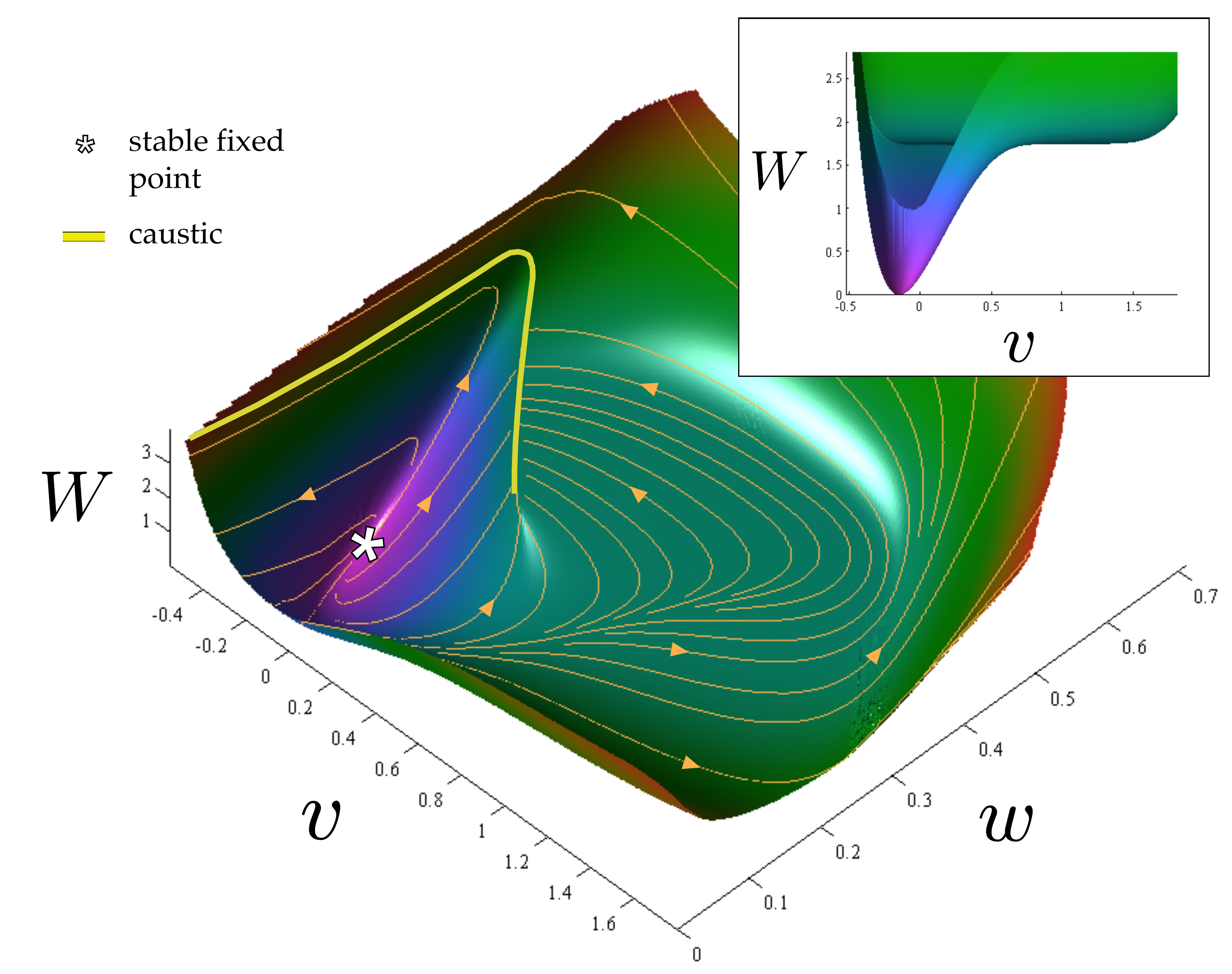}
  \caption{Type II excitability, showing the quasipotential $W(v, w)$ computed using the OUM on a $500\times500$ grid. Yellow curve shows the caustic. Orange streamlines show the behavior of MPPs.}
  \label{fig:type2a_qp}
\end{figure}
The orange streamlines represent segments of MPPs that start at the stable fixed point.
As shown in Fig.~\ref{fig:type2a_qp}, there is a caustic formation point where MPPs begin to overlap.
To the right of this point, there is a region where MPPs get very close to deterministic trajectories during the excitation phase.
That is, during the excitation phase, MPPs are very similar to deterministic action potentials.
This matches the intuition that noise must push the system away from the stable fixed point into the region where deterministic trajectories undergo an excitable event.
In the deterministic system \eqref{eq:28} the fixed point is stable.
In the higher dimensional Hamiltonian dynamical system \eqref{eq:41}, the fixed point is a saddle, with a stable manifold tangent to $\vp=0$ (recall that setting $\vp=0$ recovers the deterministic limit).
Hence, there is a subset of MPPs that form closed trajectories that are heteroclinic connections between the unstable and stable manifold.
As these MPPs return toward the fixed point, they collide with unstable MPPs leading away from the fixed point along a caustic.
The caustic wraps around the potential well region, marking the edge of a flat shelf-like region corresponding to the excited state.

Finally, to check the accuracy of the result, we use Monte Carlo simulations to confirm that the behavior of the process during the initiation phase is consistent with the MPPs shown in Fig.~\ref{fig:type2a_pp2}.
Motivated by Ref.~\cite{dykman96a}, we perform Monte-Carlo simulations (for details about the algorithm, see Appendix \ref{sec:monte-carlo-simul}) to obtain trajectories that start at the stable fixed point and eventually reach the line $v = 0.6$.
Several simulation trajectories are shown in Fig.~\ref{fig:mctraj_type2a} along with the MPP (dashed red curve) that corresponds to the dashed red curve shown in Fig.~\ref{fig:type2a_pp2}.
From the ensamble of these trajectories, we determine the statistics of the position as a function of time preceding arrival at the threshold.
We then set the time at which each trajectory ends (i.e., the time at which they reach $v=0.6$) to $t=0$ and look backward in time in order to observe the behavior of the process during the initiation phase of a spontaneous action potential.

Trajectories are sampled to obtain histograms of the {\em path history} defined as probability density $Q(\vx, t | \vx_{f}, t_{f}; \vxa, t_{0})$, for $t_{0} < t < t_{f}$.
We can express the path history as
\begin{equation}
  \label{eq:23}
  Q(\vx, t) = \frac{\rmp(\vx_{f}, t_{f} | \vx, t)\rmp(\vx, t | \vxa, t_{0})}{\rmp(\vx_{f}, t_{f} | \vxa, t_{0})},
\end{equation}
where $v_{f} = 0.6$, $t_{f} = 0$ and (effectively) $t_{0} = - \infty$ (trajectories take a long time to reach $v_{f}$).
Each pane in Fig.~\ref{fig:mc_type2} represents a discrete approximation from $10^{3}$ simulation trials of $Q$ at a different point in time. 
The most probable path (dashed red curve) coincides with the peak of the histogram as a function of time.
\begin{figure}[tbp]
  \centering
  \includegraphics[width=12cm]{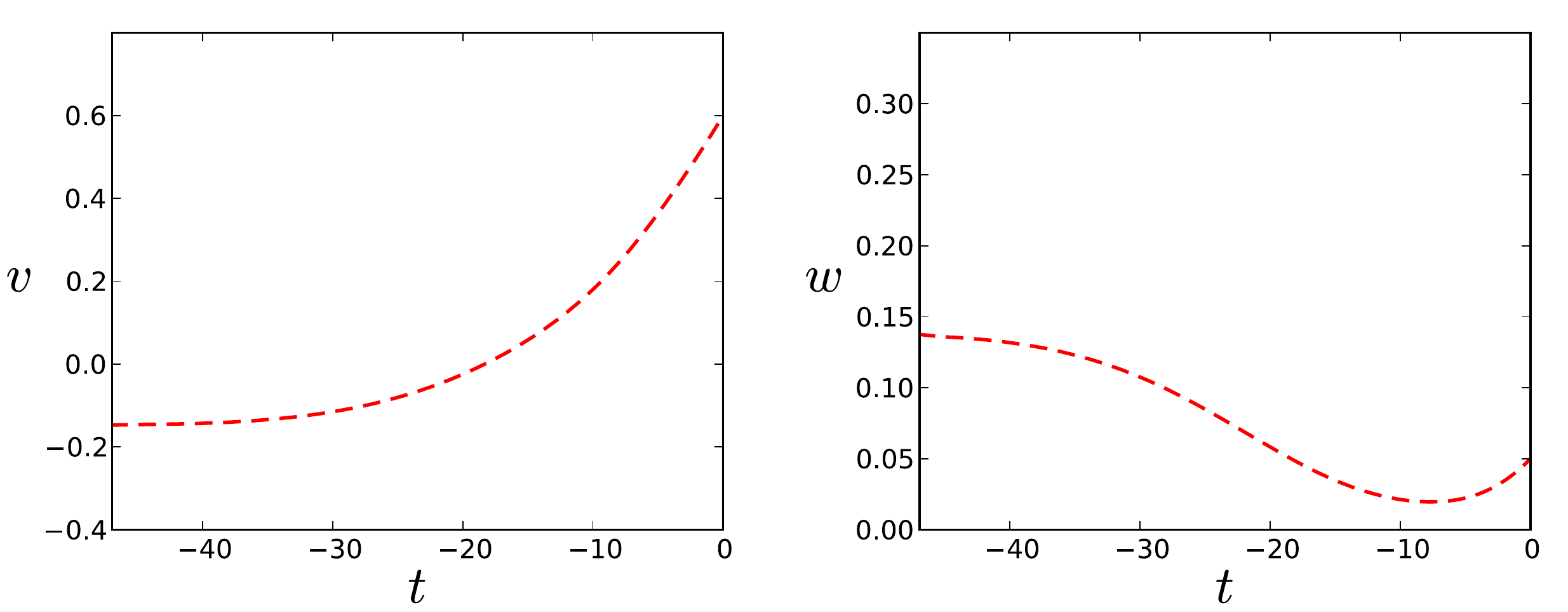}
  \caption{The most probable path (corresponding to the dashed red curve in Fig.~\ref{fig:type2a_pp2}) during initiation as a function of time along with several simulation trajectories (grey curves) for the Type II system.}
  \label{fig:mctraj_type2a}
\end{figure}

\begin{figure}[tbp]
  \centering
  \includegraphics[width=13cm]{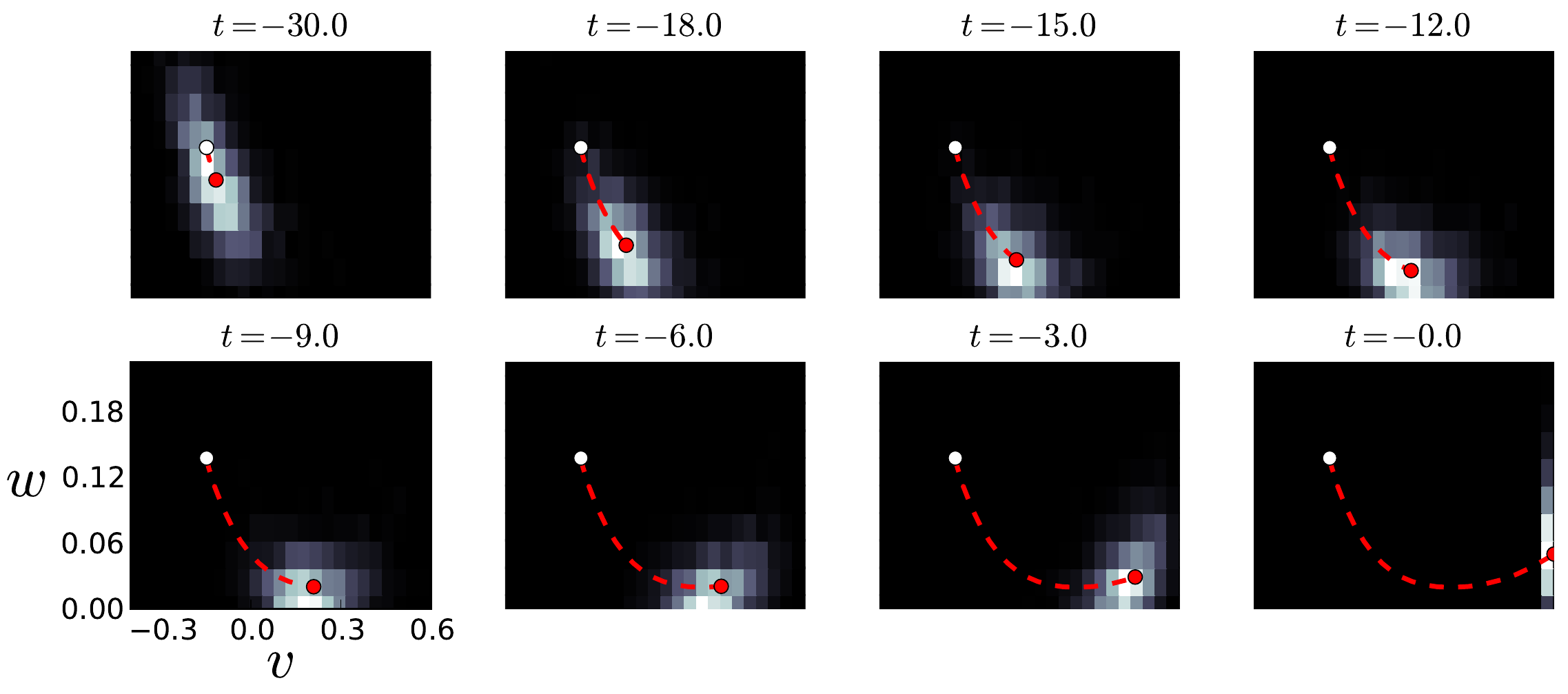}
  \caption{The path history $Q$ during the initiation phase of a Type II spontaneous action potential, computed from $10^{3}$ simulation trajectories. The dashed red curve is the MPP (the same curve as shown in Fig.~\ref{fig:mctraj_type2a}) up to time $t$. The stable fixed point is shown as a white circle.}
  \label{fig:mc_type2}
\end{figure}

\section{Discussion}
A spontaneous excitable event has two phases: the initiation phase and the excitation phase.
During the initiation phase, ion channel fluctuations push the voltage and \Kt~channel population to a threshold.
The MPP taken during the initiation phase is most likely to follow the most probable exit path (MPEP), which is very different from any deterministic trajectory.
The excitation phase takes the system from the threshold through a transient spike in voltage and ultimately back to the stable fixed point.
The MPP taken during the excitation phase follows (at least in part) closely to a deterministic trajectory.

For the basic type I system, there is a single MPP that characterizes the spontaneous action potential.
During the initiation phase, the MPP moves from the stable fixed point to the saddle along the MPEP (red curve Fig.~\ref{fig:type1a_pp2}).
During the excitation phase, the MPP follows the deterministic unstable manifold around the unstable fixed point back toward the stable fixed point (green curve Fig.~\ref{fig:type1a_pp2}).
The initiation phase for the type I system with bursting is similar (red curve Fig.~\ref{fig:type1b_qp}(a)).
Unlike the previous case, the excitation phase has multiple parts.
The excitation phase starts with the right branch of the unstable manifold until reaching the stable limit cycle (green curve Fig.~\ref{fig:type1b_pp1}(a)).
The number of oscillations around the unstable fixed point, and therefore the length of the burst, is not described by a MPP, but after fluctuations push the trajectory far enough away from the limit cycle, an MPP describes its approach back to the saddle (blue curve Fig.~\ref{fig:type1b_pp2}).
After returning to the saddle, the trajectory is most likely to return to the stable fixed point along the left branch of the unstable manifold (green curve Fig.~\ref{fig:type1b_pp1}(a)).

The analysis of the type II system is less tractable due to the lack of a saddle.
However, the presence of a caustic provides an effective energy barrier for initiation.
Even without the saddle, MPPs follow closely along a single path (dashed red curve Fig.~\ref{fig:type2a_qp}(a)) as they approach the energy barrier.
Hence, a single MPP characterizes the initiation phase similar to the type I case described above.
However, during the excitation phase, the amplitude of the spontaneous action potential is not described by a single MPP because there are many MPPs that have approximately equal likelihood (blue region in Fig.~\ref{fig:type2a_qp}(a)).

The biggest difference between type I and type II spontaneous action potentials is the behavior during the excitation phase.
Type I action potentials can exhibit voltage oscillations, especially when there is a stable limit cycle surrounding the unstable fixed point.
Type II action potentials on the other hand do not show this behavior.

Interestingly, the behavior of type I and type II spontaneous action potentials are similar during the initiation phase.
There are two ion channel species that contribute to spontaneous initiation.
The number of open channels determines the net current $\rf(v, m, n)$, and if the net current is increased for enough time, the voltage rises above threshold, generating an action potential.
There are two ways to increase the net current: by opening \Nat~channels or by closing \Kt~channels.

Clearly, the maximum increase in the net current from closing \Kt~channels occurs when all of the \Kt~channels are closed.
If we fix $m = 0$ to be constant, removing \Kt~channel fluctuations and dynamics, the deterministic system becomes
$$ v' = \rf(v, 0, Nx_{\infty}(v)). $$
The function $\rf(v, 0, Nx_{\infty}(v))$ has cubic-like shape.
For $I_{\app} < I_{*}$, there are three fixed points where $v' = \rf(v, 0, Nx_{\infty}(v)) = 0$: two stable separated by one unstable.
Only the third fixed point is above the threshold.
To generate an action potential, \Nat~channel fluctuations are required to increase the voltage from the first fixed point past the second.
At $I_{\app} = I_{*}$, the first two fixed points vanish.
For $I_{\app} > I_{*}$, only the third fixed point remains, which means that \Kt~channels alone are capable of initiating an action potential.
That is, once all of the \Kt~channels close, the voltage can deterministically increase above threshold.
Fig.~\ref{fig:type1a_pp1} shows the $I_{\app}<I_{*}$ case while Fig.~\ref{fig:type2a_pp1} shows the $I_{\app}>I_{*}$ case.

Recall that the parameter $\hgam = 1/(\epsilon M)$ controls the relative strength of \Nat~and \Kt~channel fluctuation.
As $\hgam$ decreases (equivalently $M$ increases with $\epsilon$ fixed) the \Kt~channel fluctuations become less significant than \Nat~channel fluctuations.
For $\hgam$ small enough, \Nat~channels provide the dominant contribution to spontaneous initiation.
It is natural to conclude then that for $\hgam$ large enough, \Kt~channels provide the dominant contribution to spontaneous initiation.
Indeed, this is the case when $I_{\app} > I_{*}$.
Notice that MPPs leading to action potentials drop below the $v$-nullcline in Fig.~\ref{fig:type2a_pp2}.
If instead we have $I_{\app} < I_{*}$ then \Kt~channels alone cannot induce an action potential, regardless of how large $\hgam$ is.

\vfill

\pagebreak

\appendix

\section{Ordered upwind method}
\label{sec:app_oum}
The ordered upwind method (OUM) is a finite difference method that approximates the quasipotential at a set of discrete grid points.
That is, it approximates the solution to the static Hamilton--Jacobi equation,
\begin{equation}
  \label{eq:3}
  \mathcal{H}(\vx, \nabla W) = 0,
\end{equation}
where ${\cal H}$ is given by \eqref{eq:64}.
The method is well known for solving static Hamilton-Jacobi equations \cite{sethian01a} and has been recently adapted for use in the stochastic setting for continuous Markov processes \cite{cameron12a}.
This adaptation takes advantage of a geometric minimum action formulation of the path integral \cite{heymann08a}.
However, the algorithm presented in Ref.~\cite{cameron12a} works only for continuous Markov processes and must be modified as follows.

The method exploits two facts about the Hamiltonian dynamical system: (i) the action $\act(t)$ is an increasing function of $t$ and satisfies a least action principle; and (ii) the quasipotential is related to the action on the ${\cal H} = 0$ MPPs by $W(\vx(t)) = S(t)$ (see Section \ref{sec:mpp}).
The solution surface is initially defined on the interior of a level curve of the quasipotential $W$ surrounding a stable fixed point.

For grid points in a small region containing the stable fixed point, the solution is known before hand.
The initially computed region is the interior of a level curve of $W$, the function we seek to compute.
That is, a curve $\vx(s)$ such that $W(\vx(s) = \text{Const}$.
The method can be described as sequentially computing the solution at grid points closest to those points previously computed 
As it grows outward, the outer boundary between computed and uncomputed grid points is approximately a level curve of $W$.

To organize the computation, grid points are categorized as follows.
The grid points within the computed region nearest to the boundary are labeled {\em accepted front}; all other points in the computed region are labeled {\em accepted}.
All grid points exterior to this region must be computed one at a time.

Any uncomputed grid point that is adjacent to an accepted front point is labeled a {\em considered} point.
At each step, a tentative value of $W$ is computed for each considered point using a finite diference approximation described below. 
All of the uncomputed grid points that are not considered points (those not close to the accepted front) are labeled {\em unconsidered}.
The finite difference formula, detailed below, requires two accepted front points, adjacent to each other, to update a considered point.  
For a given considered point, the finite difference is computed from each adjacent pair of accepted front points within a pre specified range, and the pair that results in the smallest value of $W$ is chosen as a tentative value. 

The pre specified range, call it $\gamma$, is necessary in order for information to propagate along characteristics.
If the Hamiltonian is very simple (quadratic in $\vp$ and from a system satisfying detailed balance) then the update will always come from an adjacent pair because characteristics follow the gradient of $W$.
This is not so for nonequilibrium steady states.
The characteristic leading to the considered point might not pass between two of the closest accepted points, it might pass between a pair of accepted front points that are farther away.
This happens when the angle between the characteristic and the level curve is small.
In particular, we know that this angle is small when characteristics are close to deterministic trajectories and $\norm{\vp}\ll 1$.

At the beginning of each iteration of the method, the considered point with the smallest tentative value of $W$ is chosen and becomes a new accepted front point.
Any of the previous accepted front points that are now on the interior become accepted points.
Unconsidered points that are adjacent to the new accepted front point become considered points, and tentative values of $W$ are computed for each new considered point.
Additionally, considered points that are within the pre specified range $\gamma$ of the new accepted front point have their tentative values recomputed.
Once all tentative values have been computed as necessary, the process repeats by choosing the considered point with the smallest tentative value.
The method stops when there are no more considered or unconsidered points (or when a specified maximal value of $W$ is reached).

We summarize the algorithm as follows
\begin{enumerate}
  \item[Step 1.] Find the considered point with the smallest value of $W$ (a good choice to keep a sorted list is the heap sort algorithm)
  \item[Step 2.] Relabel the chosen considered point as accepted front
  \item[Step 3.] Relabel any nearby accepted front point that has no adjacent considered point as accepted
  \item[Step 4.] Relabel nearby unconsidered points adjacent to the new accepted front point as considered
  \item[Step 5.] Compute a tentative value of $W$ for each considered points with in the range $\gamma$ of the new accepted front point as described above
  \item[Step 6.] Repeat until all grid points have been computed or a prespecified value of $W$ has been reached.
\end{enumerate}

The Gaussian approximation detailed in Section \ref{sec:linear-theory} can be used to initialize the method.  
The solution to the algebraic Ricatti equation \eqref{eq:88} allows us to generate a small elliptical boundary around the stable fixed point on which we specify initial data for the characteristics.
The solution to \eqref{eq:88} at $\vxa$ can be computed by setting $S = Z^{-1}$ and multiplying both sides of \eqref{eq:88} by $Z^{-1}$ to obtain the linear problem
\begin{equation}
  \label{eq:89}
  D + CS + SC^{T} = 0.
\end{equation}
Let $\delta \ll 1$ be the value of $W$ on the initial level curve.
The initial region is the set of grid points $\{ \vx_{n} : W(\vx_{n}) \leq \delta \} $ on the interior of the initial level curve, where
\begin{equation}
  \label{eq:59} 
  W(\vx_{n}) = \frac{1}{2}(\vx_{n} - \vxa )^{T}Z(\vxa)(\vx_{n} - \vxa ),
\end{equation}
and
\begin{equation}
  \label{eq:60}
  \nabla W(\vx_{n}) = Z(\vxa)(\vx_{n} - \vxa ).
\end{equation}
Characteristics converge to MPPs that start at the stable fixed point after taking the limit $\delta \to 0$.
For numerical solutions, we take $\delta \ll 1$ large enough to obtain a stable solution but small enough that quasipotential is accurate.
Initial data can also be specified near a stable limit cycle; we leave the details to Appendix \ref{sec:app_lc}.

\subsection{Variational finite difference}
Assume that the Hamiltonian is a convex function of $\vp$.
Compute the finite difference approximation at a point $\vx$ using the adjacent pair of accepted front points, $\vx_{1}$ and $\vx_{2}$, with corresponding $\vp_{1}$ and $\vp_{2}$ previously computed.
Define 
\begin{equation}
  \vx_{\theta} = \theta \vx_{1} + (1-\theta)\vx_{2}, \quad 0<\theta < 1.
\end{equation}
Likewise, let $\vp_{\theta} = \theta \vp_{1} + (1-\theta)\vp_{2}$ and $W_{\theta} = \theta W(\vx_{1}, \vp_{1}) + (1-\theta)W(\vx_{2}, \vp_{2})$.
Differentiating $W(\vx(t))$ with respect to time yields the differential relationship $dW = d\vx\cdot \vp$.
Hence, given $\theta$ and $\vp$, a finite difference formula is given by
\begin{equation}
  \label{eq:4}
      W(\vx) \approx  W_{\theta} + d\vx_{\theta}\cdot \vp.
\end{equation}
To specify $\theta$ and $\vp$, we use a variational finite difference formula \cite{heymann08a}, given by
\begin{equation}
  \label{eq:33}
  W(\vx) \approx \inf_{\theta\in (0, 1)}\left\{ W_{\theta} + \sup_{\stackrel{\vp\in \mathbf{R}^{2}}{ \mathcal{H}(\vp)=0}}\left(d\vx_{\theta}\cdot \vp - \mathcal{H}(\vp)\right)  \right\}.
\end{equation}
The outer minimization follows from the least action principle and can be numerically computed using standard minimization routines.
The inner maximization is the Legendre transform of the Hamiltonian, with the added constraint that $\mathcal{H}(\vp) = 0$.
Since the Legendre transform of the Hamiltonian is the Lagrangian, this constrains $\theta$ so that $d\vx_{\theta}$ is 'upwind' of $\vx$ in the direction of the MPP passing through $\vx$.
In other words, the finite differences move information along characteristics.

A numerical method to compute the maximizer, $\vp(d\vx_{\theta})$, is as follows.
Using a Lagrange multiplier $\mu$ we want to compute
\begin{equation}
  \label{eq:6}
  \sup_{\vp\in \mathbf{R}^{2}}\left[d\vx_{\theta} \cdot \vp - \mu^{2} \mathcal{H}(\vx_{\theta}, \vp)\right],\quad {\cal H}(\vx_{\theta}, \vp) = 0.
\end{equation}
It follows that the maximizer satisfies
\begin{equation}
  \label{eq:7}
  \mathcal{H}(\vx_{\theta}, \vp) = 0,\quad d\vx_{\theta} - \mu^{2} \nabla_{\vp}\mathcal{H}(\vx_{\theta}, \vp) = 0.
\end{equation}
Rewrite the second equation as
\begin{equation}
  \label{eq:8}
\nabla_{\vp}\mathcal{H}(\vx_{\theta}, \vp) = \lambda \frac{d\vx_{\theta}}{\norm{d\vx_{\theta}}},
\end{equation}
where $\lambda = \norm{d\vx_{\theta}}/\mu^{2}$.
Recall that $\nabla_{\vp}\mathcal{H}(\vp) = \vx'$ along characteristics.  
Hence, the Lagrange multiplier $\lambda$ has the interpretation $\lambda = \frac{ds}{dt}$, where $s$ is arclength along a characteristic.

If we replace ${\cal H}$ by its diffusion approximation by expanding ${\cal H}$ around $\vp = 0$ to second order in $\vp$, the exact solution to the variational problem \eqref{eq:7} can be calculated.
This idea can be used to derive an iterative numerical procedure, which is equivalent to Newton's method, by successively computing the solution to a locally valid approximation given by expanding ${\cal H}$ around $\vp = \vp_{n}$ to second order.
Given $\vx_{\theta}$, $d\vx_{\theta}$, and an initial guess $\vp_{0}$, the numerical approximation $\vp_{n+1}$, $\lambda_{n+1}$ satisfies
\begin{equation}
  \label{eq:9}
  \vp_{n+1} = \vp_{n} + \mathcal{H}_{\vp \vp}^{-1}\left[\lambda_{n}\frac{d\vx_{\theta}}{\norm{d\vx_{\theta}}} - \mathcal{H}_{\vp}\right],
\end{equation}
where
\begin{equation}
  \label{eq:10}
   \mathcal{H}_{\vp} =  \nabla_{\vp}\mathcal{H}(\vx_{\theta}, \vp_{n}), \quad \{\mathcal{H}_{\vp\vp}\}_{ij} = \frac{\partial^{2}}{\partial p_{i} \partial p_{j}}[\mathcal{H}(\vx_{\theta}, \vp_{n})].
\end{equation}
 The Lagrange multiplier is given by
\begin{equation}
  \label{eq:11}
  \lambda_{n} =
  \begin{cases}
\norm{d\vx_{\theta}}\sqrt{\alpha}, & \alpha \geq 0\\
0 & \alpha < 0
  \end{cases},\quad  \alpha = \frac{\mathcal{H}_{\vp}^{T}\mathcal{H}_{\vp\vp}^{-1}\mathcal{H}_{\vp} - 2\mathcal{H}(\vx_{\theta}, \vp_{n})}{d \vx_{\theta}^{T} \mathcal{H}_{\vp\vp}^{-1}d \vx_{\theta}}.
\end{equation}

\subsection{Initial data for stable limit cycles}
\label{sec:app_lc}
Suppose the deterministic system has a stable limit cycle $\vx_{c}(t)$. 
Introduce the orthogonal coordinate system $(s, r)$ with $0\leq s \leq L$ an arclength parameterization of the limit cycle and $r$ the signed distance from the limit cycle with respect to the direction outward normal to the limit cycle.
The deterministic dynamics,
\begin{equation}
  \label{eq:104}
  \vx' = \nabla_{\vp}\mathcal{H}(\vx, 0) \equiv \mathbf{F}(s, r),
\end{equation}
can be expanded around the limit cycle with
\begin{equation}
  \label{eq:105}
  \mathbf{F}(\vx_{c}(s) + r\hat{\bm{\eta}}) \sim B(s)\hat{\bm{\tau}} + a_{0}(s)r\hat{\bm{\eta}},
\end{equation}
where $\hat{\bm{\tau}}$ and $\hat{\bm{\eta}}$ are the unit tangent and normal vectors, respectively, and
\begin{equation}
  \label{eq:106}
   B(s)\hat{\bm{\tau}}  = \mathbf{F}(x_{c}), \quad
  a_{0}(s)  = \lim_{r\to 0}\pd{}{r}\hat{\bm{\eta}}\cdot \mathbf{F}(\vx_{c} + r\hat{\bm{\eta}})= \hat{\bm{\eta}}\cdot\left(\pd{\mathbf{F}(\vx_{c})}{(v, w)}\cdot\hat{\bm{\eta}}\right).
\end{equation}
The vector tangent to the limit cycle is $\bm{\tau} \equiv \mathbf{F}(\vx_{c})$, and we define the {\em unit} tangent vector so that $\hat{\bm{\tau}} = \nabla s$.
The normalization factor, $\norm{\bm{\tau}(s)} = B(s)$, is the speed of the deterministic trajectory along the limit cycle.
The unit outward normal vector is then given by $\hat{\bm{\eta}}  = (-\hat{\tau}_{2}, \hat{\tau}_{1}) = \nabla r$.

We want to approximate the quasipotential in a neighborhood of the limit cycle.
Expanding around $r = 0$ yields
\begin{equation}
  \label{eq:107}
  W(s, r) \sim \frac{r^{2}}{2}\pdd{W}{r}.
\end{equation}
Let $\phi(s) = \pdd{}{r}W(s, 0)$.
In order to get initial data for characteristics originating from a stable limit cycle, one need only compute $\phi(s)$.
Substituting \eqref{eq:107} and \eqref{eq:105} into $\widetilde{H}(x, p) = 0$, where $\widetilde{H}$ is given by \eqref{eq:90}, and taking the limit $r\to 0$ yields
\begin{equation}
  \label{eq:108}
 \frac{1}{2}B(s)\phi' + a_{0}(s)\phi +  D_{c}(s)\phi^{2} = 0, \quad \phi(0) = \phi(L),
\end{equation}
where
\begin{equation}
  \label{eq:109}
  D_{c}(s) \equiv \sum_{i,j}D_{ij}(s)\pd{r}{x_{i}}\pd{r}{x_{j}},\quad D_{ij}(s) \equiv \frac{\partial^{2}}{\partial p_{i} \partial p_{j}}\mathcal{H}(x_{c}(s), 0).
\end{equation}
A Gaussian approximation near the limit cycle means that $D_{ij}$ is the diffusion tensor.
The above equation \eqref{eq:108} was derived in Ref.~\cite{schuss10a} in the context of an exit over a {\em characteristic boundary}, i.e., when the separatrix is a limit cycle.

Initial data is specified along the curve $(s, r_{0}(s))$ where $W(s, 0) = \delta$ is constant (i.e., a level curve of the quasipotential).
It follows that $r_{0}(s) = \pm \sqrt{\frac{2\delta}{\phi(s)}}$, and given a periodic solution to \eqref{eq:108}, the initial data is given by
\begin{equation}
  \label{eq:110}
 \vx_{0}(s) = \vx_{c}(s) + r_{0}(s)\hat{\bm{\eta}}(s), \quad
 \vp_{0}(s) \sim r_{0}(s)\phi(s)\hat{\bm{\eta}}(s) + O(r_{0}^{2}).
\end{equation}
Equation \eqref{eq:109} can be converted to a linear equation by setting $\phi(s) = 1/\alpha(s)$ to get
\begin{equation}
  \label{eq:111}
  \frac{1}{2}B(s)\alpha' - a_{0}(s)\alpha = D_{c}(s), \quad \alpha(0) = \alpha(L).
\end{equation}
The solution is
\begin{equation}
  \label{eq:112}
  \alpha(s) = e^{\psi(s)}\left[R(s) + \frac{e^{\psi(L)}}{1 - e^{\psi(L)}}R(L) \right],
\end{equation}
where
\begin{equation}
  \label{eq:113}
  \psi(s) \equiv 2\int_{0}^{s}\frac{a_{0}(u)}{B(u)}du, \quad R(s) \equiv 2\int_{0}^{s}e^{-\psi(u)}\frac{D(u)}{B(u)}du.
\end{equation}
Hence,
\begin{equation}
  \label{eq:114}
  \phi(s) = e^{-\psi(s)}\left[R(s) + \frac{e^{\psi(L)}}{1 - e^{\psi(L)}}R(L) \right]^{-1}.
\end{equation}
Often the limit cycle must be computed numerically.  
In this case, it is more stable and efficient to solve \eqref{eq:108} numerically using an implicit finite difference scheme than to evaluate \eqref{eq:114} with numerical quadrature.

\section{Monte Carlo simulation algorithm}
\label{sec:monte-carlo-simul}
Monte-Carlo simulations are generated using an extension of the algorithm presented in \cite{keener11a}.  Instead of using the Gillespie algorithm as in \cite{keener11a}, we use the next reaction method along the lines of \cite{bokes13a}.
The algorithm is exact in the sense that the transition times can be approximated to any desired precision.
The simulations were coded in C (using the GNU Scientific Library for random number generators) and carried out in Python, using the Scipy package.
In between each jump in the number of open channels, the voltage is evolved according to the deterministic dynamics 
\begin{equation}
  \label{eq:13}
  \frac{dv}{dt} = \frac{n}{N}f_{\Na}(v) + \frac{m}{M}f_{\K}(v) + f_{\rl}(v) + I_{\app},
\end{equation}
The solution provides the relationship between voltage and time,
\begin{equation}
  \label{eq:14}
  v(t) = \left(v(t_{0}) - \frac{c_{2}}{c_{1}}\right)e^{-c_{1}(t-t_{0})} + \frac{c_{2}}{c_{1}},
\end{equation}
where
\begin{align}
  \label{eq:15}
  c_{1} &= \frac{n}{N}g_{\Na} + \frac{m}{M}g_{\K} + g_{\rl}, \\
  c_{2} &= \frac{n}{N}g_{\Na}v_{\Na} + \frac{m}{M}g_{\K}v_{\K} + g_{\rl}v_{\rl} + I_{\app}.
\end{align}
To compute the next jump time, we compute four random jump times for each of the four possible transitions: $n\to n\pm 1$ and $m \to m \pm 1$. 
Each transition time is distributed according to
\begin{gather}
  \label{eq:16}
  W^{-}_{\Na}(t) =1 - e^{-\beta_{\Na}n (t - t_{0})}, \quad
  W^{+}_{\Na}(t) = 1- \explr{-\beta_{\Na}\int_{t_{0}}^{t}\Omega^{+}_{\Na}(v(\tau))d\tau},\\
  W^{\pm}_{\K}(t) = 1- \explr{-\beta_{\K}\int_{t_{0}}^{t}\Omega^{\pm}_{\K}(v(\tau))d\tau}.
\end{gather}
After integrating the voltage dependent transition rates we obtain for $(i=+, j = {\rm Na})$ and $(i=\pm, j = {\rm K})$,
\begin{equation}
  \label{eq:17}
  \int_{t_{0}}^{t}\Omega^{i}_{j}(v(\tau))d\tau = \frac{1}{c_{1}}\Omega^{i}_{j}(\frac{c_{2}}{c_{1}})(E_{\rm i}(z^{i}_{j}e^{-c_{1}(t-t_{0})}) - E_{\rm i}(z^{i}_{j})),
\end{equation}
where
\begin{equation}
  \label{eq:18}
  z^{+}_{\Na} =  4\gamma_{\Na}\left(v(t_{0}) - \frac{c_{2}}{c_{1}}\right), \quad
  z^{\pm}_{\K} = \pm \gamma_{\K}\left(v(t_{0}) - \frac{c_{2}}{c_{1}}\right), 
\end{equation}
and $E_{\rm i}$ is the exponential integral function defined as the Cauchy principal value integral,
\begin{equation}
  \label{eq:19}
    E_{\rm i}(x) = \int_{-\infty}^{x}t^{-1}e^{t}dt, \quad x\neq 0.
\end{equation}
Denote the jump times by $t^{i}_{j}$, $i = \pm$ and $j = {\rm Na}, {\rm K}$, and let $U$ be a uniform random variable.
The jump times are given by the solution to $W^{i}_{j}(t^{i}_{j}) = U$.
There is one voltage independent jump time,
\begin{equation}
  \label{eq:21}
  t^{-}_{\Na} = -\frac{\log(U)}{n\beta_{\Na}}.  
\end{equation}
Because three of the transition rates depend on voltage, and therefore time, the distributions for the jump times are not explicitly invertible.
Hence, the next jump times are given implicitly by
\begin{gather}
\nonumber
\frac{1}{c_{1}}\Omega^{+}_{\Na}(\frac{c_{2}}{c_{1}})(E_{\rm i}(z^{+}_{\Na}e^{-c_{1}(t^{+}_{\Na}-t_{0})}) - E_{\rm i}(z^{+}_{\Na})) = -\frac{\log(U)}{\beta_{\Na}}, \\
  \label{eq:22}
\frac{1}{c_{1}}\Omega^{\pm}_{\K}(\frac{c_{2}}{c_{1}})(E_{\rm i}(z^{\pm}_{\K}e^{-c_{1}(t^{\pm}_{\K}-t_{0})}) - E_{\rm i}(z^{\pm}_{\K})) =-\frac{\log(U)}{\beta_{\K}}.
\end{gather}
To generate the voltage dependent jump times, a root finding algorithm is applied to \eqref{eq:22} with a tolerance of $ 10^{-8}$.  
Once all four transition times have been computed, the next transition time is $t^{i_{*}}_{j_{*}} = \min_{i=\pm,j=\Na,\K}\{t_{j}^{i}\}$. 
The global time is updated with $t \leftarrow t + t^{i_{*}}_{j_{*}}$.  The state is updated with $v \leftarrow v(t^{i_{*}}_{j_{*}})$ (where $v(t)$ is given by \eqref{eq:14} with $t_{0}$ the time of the previous jump), $n \leftarrow n + i_{*}$ if $j_{*} = {\rm Na}$, and $m \leftarrow m + i_{*}$ if $j_{*} = {\rm K}$.

\section{Large deviation principle}
\label{sec:app_path}
\newcommand{\ts}{N}
The derivation of the large deviation principle is based on the idea of {\em large deviations from an averaged system} (see Refs.~\cite{freidlin12a,kifer09a}).
The large deviation principle has been rigorously established for the case where the slow process is deterministic, which corresponds to the $M\to\infty$ limit where the \Kt~channel conductance is deterministic.
However, with slight modifications, their result can be applied to the stochastic ML model.
Let the random process $N(t/\epsilon)$ represent the number of open \Nat~channels.
For fixed $\vx$ it is a finite Markov chain with transition rate matrix $\mathbb{L}_{\Na}$ (the infinitesimal generator is the adjoint matrix).
The generalization of \eqref{eq:72} to the path distribution is
\begin{equation}
  \label{eq:152}
      \int_{t_{0}}^{t}\mathcal{H}(\vx(s), \vp(s))ds  =  
 \lim_{\epsilon\to 0}\epsilon \log \ave{\explr{\frac{1}{\epsilon}\int_{0}^{t}H(N(s/\epsilon), \vx(s), \vp(s))ds}},
\end{equation}
where
\begin{equation}
  \label{eq:153}
  H(N(s/\epsilon), \vx(s), \vp(s)) \equiv \pv(s) \rf(\vx(s), N(s/\epsilon)) + h(\vx(s), \pw(s)).
\end{equation}
The function $h$ is given by \eqref{eq:45} and limits to $h \to \pw(w_{\infty}(v) - w)/\tau_{w}(v)$ as $M \to \infty$.
We assume that \eqref{eq:152} implies a large deviation principle and the approximation \eqref{eq:154}.
A simpler formula for $\mathcal{H}$ than \eqref{eq:152} is obtained using the separation of timescales.
Change variables to the fast timescale $T = t/\epsilon$ to get
\begin{equation*}
    \epsilon \int_{0}^{T}\mathcal{H}(\vx(t), \vp(t))dT' \sim  
 \epsilon \log \ave{\explr{\int_{0}^{T}H(\ts(T'), \vx(t), \vp(t))dT'}},
\end{equation*}
For fixed $t$ we have that $T\to\infty$ as $\epsilon\to 0$.
Divide through by $\epsilon T$, set $\vx(t) = \vx$ and $\vp(t) = \vp$, and take the limit $\epsilon\to 0$ with $t$ fixed to get
\begin{equation*}
  \mathcal{H}(\vx, \vp) = \lim_{T\to\infty}\frac{1}{T} \log \ave{\explr{\int_{0}^{T}H(\ts(T'), \vx, \vp)dT'}}.
\end{equation*}
The above mean assumes that $\vx$ is fixed constant so that
\begin{equation}
\label{eq:155}
  \mathcal{H}(\vx,\vp) = \lim_{T\to\infty}\frac{1}{t} \log(\sum_{s'}U_{s, s'}(t)),
\end{equation}
where $U(t) = \exp\{t\mathbb{M}^{T}\}$ and $\mathbb{M} = \mathbb{L}_{\Na} + \diag{H(n, \vx, \vp)}$.
One can show \cite{freidlin12a,kifer09a} using the properties of positive semigroups that \eqref{eq:155} converges to $\lambda$, uniquely defined as the principal eigenvalue of $\mathbb{M}$;  it is real and simple, it is greater than the real part of the remaining eigenvalues, and its eigenvector $r$ is strictly positive.
The {\em Perron--Frobenius Theorem} (see Lemma \ref{lem:PF}) guarantees that the principal eigenvalue exists with $r>0$ for all $\vp$.

\section{Establishing path equivalence}
\label{sec:app_pathequiv}
Consider the $(n+1)\times (n+1)$ matrix $M(\vp) = A + D(\vp)$, where $D$ is a diagonal matrix whose elements are $C^{\infty}(\mathbb{R})$ in $\vp$, with $D(0) = 0$ and $D(\vp) \neq 0$ for $\vp \neq 0$.
Assume that the elements of $A$ and $D$ are bounded continuously differentiable functions of $\vx$, mapping $\mathcal{D} \subset \mathbb{R}^{2} \to \mathbb{R}^{2}$.
Assume that $A$ is an irreducible transition rate matrix.
That is, the diagonal elements are negative, the off diagonal elements of $A$ are nonnegative, and $\sum_{i=1}^{n}A_{ij} = 0$.
\begin{lemma}
\label{lem:PF}
  For fixed $\vp$, the following statements hold regarding the matrix $M = A + D(\vp)$ 
  \begin{enumerate}
  \item[(i)] There is exactly one positive eigenvector $q$
  \item[(ii)] The eigenvalue $\mathcal{H}$ corresponding to $q$ is real and simple
  \item[(iii)] $\mathcal{H}$ is greater than the real part of the remaining eigenvalues
  \end{enumerate}
\end{lemma}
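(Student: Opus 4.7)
The plan is to reduce the statement to the classical Perron--Frobenius theorem for irreducible nonnegative matrices by a spectral shift. The obstacle that prevents a direct application is that $M = A + D(\vp)$ is not itself nonnegative: the diagonal of $A$ is strictly negative, and $D(\vp)$ may contribute further negative entries. However, only the diagonal of $M$ can be negative, and adding a multiple of the identity shifts the spectrum while preserving eigenvectors and off-diagonal structure.

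Concretely, I would fix $\vx \in \mathcal{D}$ and $\vp \in \mathbb{R}^{2}$, and choose a constant $c > 0$ large enough that every diagonal entry of $B \equiv M + cI$ is nonnegative. Such a $c$ exists because the diagonal entries of $A$ and $D(\vp)$ are bounded. Since the off-diagonal entries of $B$ coincide with those of $A$ (shifting by $cI$ perturbs only the diagonal), and these are nonnegative by assumption, $B$ is a nonnegative matrix. Moreover, the directed graph underlying $B$ has the same edge set as that of $A$, so $B$ inherits irreducibility from $A$.

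I then apply the classical Perron--Frobenius theorem for irreducible nonnegative matrices to $B$. This yields a simple, real, positive eigenvalue $\rho \equiv \rho(B)$ equal to the spectral radius, together with a strictly positive eigenvector $q$ that is unique up to positive scaling. Furthermore, $\rho$ is the unique eigenvalue of $B$ admitting a nonnegative eigenvector, and every other eigenvalue $\lambda'$ satisfies $|\lambda'| \le \rho$. If $|\lambda'| < \rho$ then $\mathrm{Re}(\lambda') < \rho$ trivially; if $|\lambda'| = \rho$, then by simplicity of $\rho$ we have $\lambda' \neq \rho$, so $\lambda' = \rho\,\omega$ for some root of unity $\omega \neq 1$, and $\mathrm{Re}(\lambda') = \rho\,\mathrm{Re}(\omega) < \rho$.

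To finish, I would set $\mathcal{H}(\vx, \vp) \equiv \rho - c$. Since $M = B - cI$, the eigenvectors of $M$ are those of $B$ and the spectrum is translated by $-c$, so $q$ is a strictly positive eigenvector of $M$ with eigenvalue $\mathcal{H}$, and $\mathcal{H}$ is real and simple, establishing (ii). For any other eigenvalue $\lambda$ of $M$, $\lambda + c$ is an eigenvalue of $B$ distinct from $\rho$, hence $\mathrm{Re}(\lambda) = \mathrm{Re}(\lambda + c) - c < \rho - c = \mathcal{H}$, giving (iii). Finally, simplicity of $\mathcal{H}$ makes its eigenspace one-dimensional, and the Perron--Frobenius uniqueness clause rules out positive eigenvectors at any other eigenvalue, so $q$ is the unique positive eigenvector of $M$ up to scaling, proving (i). There is no substantive obstacle; the only points requiring care are verifying that the diagonal shift preserves irreducibility (which it does, since off-diagonal entries are untouched) and correctly converting the modulus inequality $|\lambda'| \le \rho$ into the strict real-part inequality needed for (iii).
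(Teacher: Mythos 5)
Your proof is correct and follows essentially the same route as the paper: shift $M$ by a multiple of the identity to obtain an irreducible nonnegative matrix, apply the classical Perron--Frobenius theorem, and translate the spectrum back. The only cosmetic difference is that the paper chooses the shift so the diagonal becomes strictly positive (making the shifted matrix primitive, so the strict peripheral inequality is immediate), whereas you allow a merely nonnegative diagonal and handle possible peripheral eigenvalues via the roots-of-unity structure of the cyclic case; both are valid.
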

\begin{proof}
  By assumption on $A$ and since $D$ is diagonal, there exists a scalar $\kappa$ such that the matrix $U = A + D(\vp) + \kappa I$ is nonnegative with positive diagonal entries.
Note that $U$ is irreducible if $A$ is irreducible.
  It follows from the Perron--Frobeneous Theorem that $U$ has exactly one positive eigenvector $q$ with a real, simple eigenvalue $\mu$ that is greater than the real part of all the remaining eigenvalues.
  Let $\mu_{j}, q_{j}$ be an eigenpair of $U$, with $\mu_{j}\neq \mu$.
  Setting $\mu_{j} = \kappa + \lambda_{j}$ we have that
  \begin{equation*}
    [A + D(\vp)]q_{j} + \kappa q_{j} = \kappa q_{j} + \lambda_{j} q_{j}.
  \end{equation*}
  It follows that $\lambda_{j}, q_{j}$ is an eigenpair of $A + D(\vp)$, which establishes (i) and (ii).
  Moreover, we have that
  \begin{equation*}
    \mu > \Re(\mu_{j}) \Rightarrow \kappa + \mathcal{H} > \Re(\kappa + \lambda_{j})\Rightarrow \mathcal{H} >  \Re(\lambda_{j}),
  \end{equation*}
which proves (iii).
\end{proof}
Define the Hamiltonian $\mathcal{H}(\vp)$ as the Perron--Frobeneous eigenvalue of $M$.
The problem with this definition is that $\mathcal{H}$ is only implicitly defined as a root of a characteristic polynomial.
Therefore, an explicit formula for the Hamiltonian is only possible in special cases.
For practical problems we need a general way to write the Hamiltonian with an explicit formula.
Consider the alternative definition of the Hamiltonian
\begin{equation}
  \label{eq:25}
  \widehat{\mathcal{H}}(\vp) = \frac{1}{\tau} \det(A + D(\vp)),
\end{equation}
where $\tau \neq 0$ is a timescale independent of $\vp$.
It immediately follows that if $\mathcal{H}(\vp) = 0$ then $\widehat{\mathcal{H}}(\vp) = 0$.
If one is interested only in an approximation of the stationary density or mean first exit times, using $\widehat{\cal H}$ is completely equivalent to using $\mathcal{H}$.
However, the connection to the Lagrangian from the path integral formulation and therefore to MPPs is less clear because $\widehat{\mathcal{H}}$ is generally not a convex function of $\vp$, which complicates the Legendre transform.

The explicit Hamiltonian \eqref{eq:25} can be used to compute the stationary density and mean exit times, but it does not necessarily yield the same characteristics as ${\cal H}$.
The four dimensional solution surface is parameterized with $(\vx(t, \theta), \vp(t, \theta))$, where $\theta$ parameterizes the initial data.
Curves of constant $\theta$ are characteristics.
Both Hamiltonians define the same solution surface given the same initial data, but the parameterization with respect to $t$ need not be the same.

One can show that characteristics of $\widehat{\mathcal{H}}$ generate the same curves of constant $\theta$ in $(\vx, \vp)$ as those of $\mathcal{H}$, and that the two differ only by a timescale.
That is, curves of contant $t$ may differ.
In other words, the characteristic projections generated by $\widehat{\cal H}$ are the same as those generated by $\mathcal{H}$, with different time parameterizations.
\begin{theorem}
  Given $\mathcal{H}$ and $\widehat{\mathcal{H}}$ as defined above, define the two dynamical systems
  \begin{equation}
    \label{eq:145}
    \frac{d\vx}{dt} = \nabla_{\vp}\mathcal{H}(\vx, \vp),\quad \frac{d\vp}{dt} = -\nabla_{\vx}\mathcal{H}(\vx, \vp),
  \end{equation}
  and
  \begin{equation}
    \label{eq:146}
    \frac{d\hat{\vx}}{ds} = \nabla_{\vp}\widehat{\mathcal{H}}(\hat{\vx}, \hat{\vp}),\quad \frac{d\hat{\vp}}{ds} = -\nabla_{\vx}\widehat{\mathcal{H}}(\hat{\vx}, \hat{\vp}),
  \end{equation}
  with initial conditions $\vx(0) = \hat{\vx}(0) = \vx_{0}$,  $\vp(0) = \hat{\vp}(0) =  \vp_{0}$, chosen so that $\mathcal{H}(\vx_{0}, \vp_{0}) = \widehat{\mathcal{H}}(\vx_{0}, \vp_{0}) = 0$.
  Assume that each system has a unique bounded and continuous solution for $t\in (0, T)$.
  There exists a continuous one to one mapping $\zeta : (0, \hat{T}) \to (0, T)$ such that
  \begin{equation}
    \label{eq:136}
    (\hat{\vx}(s), \hat{\vp}(s)) = (\vx(\zeta(s)), \vp(\zeta(s))).
  \end{equation}
\end{theorem}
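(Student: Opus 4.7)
\smallskip

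\textbf{Proof proposal.} The plan is to show that on the common zero level set $\{\mathcal{H}=0\} = \{\widehat{\mathcal{H}}=0\}$, the Hamiltonian vector fields of $\mathcal{H}$ and $\widehat{\mathcal{H}}$ are proportional, so the two flows trace identical curves but at different speeds. The main tool is the factorization of the characteristic polynomial of $M(\vp)=A+D(\vp)$ around the simple Perron eigenvalue.

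First I would write the characteristic polynomial of $M$ as
\begin{equation*}
\det(M-\lambda I) = (-1)^{n+1}(\lambda - \mathcal{H})\prod_{j=1}^{n}(\lambda - \mathcal{H}_{j}),
\end{equation*}
where $\mathcal{H}_{1},\ldots,\mathcal{H}_{n}$ are the remaining (possibly complex) eigenvalues. Evaluating at $\lambda=0$ gives
\begin{equation*}
\widehat{\mathcal{H}}(\vx,\vp) \;=\; \frac{1}{\tau}\det M(\vp) \;=\; g(\vx,\vp)\,\mathcal{H}(\vx,\vp), \qquad g(\vx,\vp) \;\equiv\; \frac{1}{\tau}\prod_{j=1}^{n}\mathcal{H}_{j}(\vx,\vp).
\end{equation*}
By Lemma~\ref{lem:PF}, $\mathcal{H}$ is a simple real eigenvalue strictly greater than $\Re(\mathcal{H}_{j})$ for all $j$, so on $\{\mathcal{H}=0\}$ every $\mathcal{H}_{j}$ has strictly negative real part and in particular is nonzero. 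Thus $g$ is continuous and nonvanishing in a neighborhood of the zero level set, and since complex eigenvalues come in conjugate pairs the product $g$ is real with a consistent sign on each connected component of the set.

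Next I would differentiate the identity $\widehat{\mathcal{H}}=g\,\mathcal{H}$, obtaining $\nabla\widehat{\mathcal{H}} = g\,\nabla\mathcal{H} + \mathcal{H}\,\nabla g$. Restricting to $\{\mathcal{H}=0\}$ kills the second term, so on this set $\nabla_{\vp}\widehat{\mathcal{H}}=g\,\nabla_{\vp}\mathcal{H}$ and $\nabla_{\vx}\widehat{\mathcal{H}}=g\,\nabla_{\vx}\mathcal{H}$. Because $\widehat{\mathcal{H}}$ is a first integral of \eqref{eq:146} and $\widehat{\mathcal{H}}(\vx_{0},\vp_{0})=0$, the trajectory $(\hat{\vx}(s),\hat{\vp}(s))$ stays on $\{\widehat{\mathcal{H}}=0\}=\{\mathcal{H}=0\}$ for all $s\in(0,\hat{T})$, so the proportionality of gradients applies along the entire trajectory. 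Hence
\begin{equation*}
\frac{d\hat{\vx}}{ds} \;=\; g(\hat{\vx}(s),\hat{\vp}(s))\,\nabla_{\vp}\mathcal{H}(\hat{\vx}(s),\hat{\vp}(s)), \qquad \frac{d\hat{\vp}}{ds} \;=\; -g(\hat{\vx}(s),\hat{\vp}(s))\,\nabla_{\vx}\mathcal{H}(\hat{\vx}(s),\hat{\vp}(s)).
\end{equation*}

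Finally I would define the time change by $\zeta(s)=\int_{0}^{s}g(\hat{\vx}(s'),\hat{\vp}(s'))\,ds'$. Since $g$ is continuous and has constant sign along the trajectory, $\zeta$ is $C^{1}$ and strictly monotone, hence a homeomorphism onto its image; setting $\hat{T}$ to be the domain of existence produced by \eqref{eq:146} and $T=\zeta(\hat{T})$, the map $\zeta:(0,\hat{T})\to(0,T)$ is one to one. Letting $(\tilde{\vx}(t),\tilde{\vp}(t))\equiv(\hat{\vx}(\zeta^{-1}(t)),\hat{\vp}(\zeta^{-1}(t)))$ and applying the chain rule together with the identity above shows that $(\tilde{\vx},\tilde{\vp})$ satisfies \eqref{eq:145} with $(\tilde{\vx}(0),\tilde{\vp}(0))=(\vx_{0},\vp_{0})$. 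Uniqueness of solutions to \eqref{eq:145} then forces $(\tilde{\vx}(t),\tilde{\vp}(t))=(\vx(t),\vp(t))$, which rearranges to the claimed identity \eqref{eq:136}.

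The routine parts are the factorization, the gradient identity, and the chain rule verification. The step requiring the most care is the argument that $g$ is nonvanishing and of constant sign along the entire trajectory, together with the matching of maximal existence intervals so that $\zeta$ is a bijection onto $(0,T)$ rather than merely onto a proper subinterval; this relies essentially on the strict spectral gap guaranteed by Lemma~\ref{lem:PF} together with continuity of the eigenvalues of $M(\vp)$ in $\vp$.
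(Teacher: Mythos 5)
Your proposal follows essentially the same route as the paper's proof: both arguments establish that $\nabla\mathcal{H}$ and $\nabla\widehat{\mathcal{H}}$ are proportional on the zero level set, with a factor that Lemma~\ref{lem:PF} guarantees is nonvanishing there, and then integrate that factor to build the time change $\zeta$. Your factorization $\widehat{\mathcal{H}} = g\,\mathcal{H}$ with $g = \tau^{-1}\prod_{j}\mathcal{H}_{j}$ is the same identity the paper extracts by implicitly differentiating the characteristic equation $\Omega(\mathcal{H})=0$ and then setting $\mathcal{H}=0$: the paper's coefficient $a_{1}=\Omega'(0)$ is, up to sign, exactly your product of non-Perron eigenvalues. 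The one structural difference is the direction of the reparameterization: the paper time-changes the solution of \eqref{eq:145} to manufacture a solution of \eqref{eq:146} and invokes the assumed uniqueness for \eqref{eq:146}, whereas you time-change the solution of \eqref{eq:146} and invoke uniqueness for \eqref{eq:145}.

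That reversal is where your write-up has two patchable defects. First, the asserted identity $\{\widehat{\mathcal{H}}=0\}=\{\mathcal{H}=0\}$ is false in general: the determinant vanishes whenever \emph{any} eigenvalue vanishes, so $\{\widehat{\mathcal{H}}=0\}=\{\mathcal{H}=0\}\cup\{g=0\}$, and conservation of $\widehat{\mathcal{H}}$ along \eqref{eq:146} does not by itself confine the hat trajectory to $\{\mathcal{H}=0\}$, which is the only place your gradient identity holds. The patch uses your own spectral-gap observation: let $s^{*}$ be the supremum of times up to which $\mathcal{H}(\hat{\vx}(s),\hat{\vp}(s))=0$; by continuity $\mathcal{H}=0$ at $s^{*}$, hence $g\neq 0$ in a neighborhood of that trajectory point, and since $g\mathcal{H}\equiv 0$ along the trajectory this forces $\mathcal{H}=0$ slightly beyond $s^{*}$, contradicting maximality unless $s^{*}=\hat{T}$. (The paper's direction avoids this issue entirely, because the $\mathcal{H}$-flow conserves $\mathcal{H}$ itself.) Second, you never fix the sign of $g$: if $g<0$ along the trajectory, your $\zeta$ is decreasing and cannot both satisfy $\zeta(0)=0$ and map onto $(0,T)$; the two flows would traverse the curve in opposite time directions. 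This is precisely what the paper's freedom in choosing $\tau$ is for---$\tau$ must be selected so that the proportionality factor is positive, and your constant-sign argument then makes this a global choice along the trajectory. With those two repairs your argument is complete and equivalent to the paper's.
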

\begin{proof}
Write the eigenvalues of $M$ as $\lambda_{i}$, $i=1,\cdots, n$ and set $\lambda_{1} = \mathcal{H}$.
Write the characteristic equation as
\begin{equation}
  \label{eq:26}
  \Omega(\lambda) \equiv a_{n}\lambda^{n} + a_{n-1}\lambda^{n-1}+ \cdots + a_{1}\lambda + a_{0} = 0,
\end{equation}
where $a_{0} = \det(A+D(\vp)) = \tau \widehat{\cal H}$.
By assumption on the elements of $A$ and $D$, the coefficients are continuously differentiable functions of $\vp$ and $\vx$.
Substituting $\lambda = \mathcal{H}$ into \eqref{eq:26} and differentiating yields
\begin{equation}
  \label{eq:27}
  \Omega'(\mathcal{H})\nabla_{\vp}\mathcal{H} = -\sum_{j=0}^{n}\mathcal{H}^{j}\nabla_{\vp}a_{j},\quad
  \Omega'(\mathcal{H})\nabla_{\vx}\mathcal{H} = -\sum_{j=0}^{n}\mathcal{H}^{j}\nabla_{\vx}a_{j},
\end{equation}
where
\begin{equation}
  \label{eq:144}
  \Omega'(\mathcal{H}) = \sum_{j=1}^{n}ja_{j}\mathcal{H}^{j-1}.
\end{equation}
By assumption, $\mathcal{H}(\vx(t), \vp(t)) = 0$.
Notice that
\begin{equation}
\Omega'(0) =   a_{1}(\vp) = \prod_{j=2}^{n}\lambda_{j}(\vp).
\end{equation}
Furthermore, since $\mathcal{H}$ is a simple eigenvalue and $\mathcal{H} > \Re(\lambda_{i})$, for $i = 2, \cdots, n$, the remaining eigenvalues must be nonzero, and it follows that $\Omega'(0) \neq 0$.

Substituting $\mathcal{H} = 0$ into \eqref{eq:27} and yields
\begin{equation}
   \frac{a_{1}}{\tau}\pd{\mathcal{H}}{p_{i}} = -\pd{\widehat{\cal H}}{p_{i}},\quad
   \frac{a_{1}}{\tau}\pd{\mathcal{H}}{x_{i}} = -\pd{\widehat{\cal H}}{x_{i}},
\end{equation}
We are free to choose $\tau$ so that $a_{1}(\vx, \vp)/\tau$ is bounded, continuous and positive for all $\vx \in \mathcal{D}$.
Setting $\zeta'(s) = a_{1}/\tau$ with $\zeta(0) = 0$, we have that $\zeta(s)$, $s\geq0$, is an increasing one-to-one function.
Hence, if $\vx(t), \vp(t)$ is a solution to \eqref{eq:145} then $\vx(\zeta(s)), \vp(\zeta(s))$ is a solution to \eqref{eq:146}. 
\end{proof}

\section{The pre exponential factor calculation}
\label{sec:prefactor}
Collecting second order terms in the WKB expansion and applying a solvability condition yields the prefactor equation,
\begin{multline}
  \label{eq:150}
   \pd{k}{v}\sum_{n}\rf(\vx, n)l(n | \vx)r(n | \vx) + \pd{k}{w}\pd{h}{\pw}  + \frac{1}{2}k\pdd{W}{w}\pdd{h}{\pw}\\
 + k\sum_{n}l(n | \vx)\left[\pd{}{v}(\rf(\vx, n)r(n | \vx)) + \pd{}{w}\left(\pd{h}{\pw}r(n | \vx)\right)\right]  = 0,
\end{multline}
where the left eigenvector satisfies
\begin{equation}
  \label{eq:118}
  \left[\frac{1}{\hgamna}\mathbb{L}^{*}_{\Na} + \pv(\vx) \rf(\vx, n) + h(\vx, \pw(\vx))\right]l(n | \vx) = 0.
\end{equation}
Along characteristics, one can show that
\begin{equation}
  \label{eq:149}
  \frac{dv}{dt} = \sum_{n}\rf(\vx, n)l(n | \vx)r(n | \vx), \quad \frac{dw}{dt} = \pd{h}{\pw}.
\end{equation}
It follows from \eqref{eq:150} that along characteristics, the pre exponential factor satisfies
\begin{equation}
  \label{eq:56}
\begin{split}
  \frac{dk}{dt} &= k\left [ \sum_{n}l(n | \vx)\pd{}{v}(\rf(\vx, n)r(n | \vx)) \right. \\
& \qquad \qquad \left.  + \pd{h}{\pw}\sum_{n}l(n | \vx)\pd{r}{w}(n | \vx) + \frac{\partial^{2}h}{\partial\pw\partial w} + \frac{1}{2}\pdd{W}{w}\pdd{h}{\pw}\right ].
\end{split}
\end{equation}
Note that \eqref{eq:56} requires the Hessian matrix \eqref{eq:58}, which satisfies \eqref{eq:38} on characteristics.

Finally to calculate the left eigenvector, we use the anzatz $l(n | \vx) = C^{n}$ for some yet to be determined constant $C$.
After substituting the anzatz into \eqref{eq:118} we obtain
\begin{equation}
  \label{eq:119}
  n\left[-C^{2}a_{\Na} + C\left(\frac{\hgamna\pv}{N}f_{\Na} a_{\Na} - 1\right) + 1\right] + NC\left[a_{\Na}C + \frac{\hgamna}{N}\left(\pv g + h\right) - a_{\Na}\right] = 0.
\end{equation}
Setting the $n$ independent term to zero reveals $C = \frac{A}{a_{\Na}}$.
The remaining $n$ dependent term in \eqref{eq:119} is zero since $\mathcal{H}(\vx, \vp) = 0$.
Normalizing $l(n | \vx)$ so that
\begin{equation}
  \label{eq:121}
  \mathcal{N}\sum_{n=0}^{N}l(n | \vx)r(n | \vx) = \mathcal{N}\sum_{n=0}^{N}\binom{N}{n}\Lambda^{n}(1-\Lambda)^{N-n}\left(\frac{A}{a_{\Na}}\right)^{n} = 1
\end{equation}
yields
\begin{equation}
  \label{eq:116}
  l(n | \vx) = \left(\frac{1 + A}{1 + A^{2}/a_{\Na}}\right)^{N}\left(\frac{A}{a_{\Na}}\right)^{n}.
\end{equation}
A lengthy but straightforward calculation, using \eqref{eq:67} and \eqref{eq:116}, shows that
\begin{equation}
  \sum_{n = 0}^{N}l(n | \vx)\pd{r}{w}(n | \vx) = \frac{N\pd{A}{w}(A - a_{\Na})}{(1 + A)(A^{2} + a_{\Na})},
\end{equation}
and
\begin{equation}
\nonumber
\begin{split}
  &\sum_{n = 0}^{N}l(n | \vx)\pd{}{v}(\rf(\vx, n)r(n | \vx)) \\
   &\qquad  =  N\pd{A}{v}\left[\frac{A\left(g - \Lambda f_{\Na}\right)}{A^{2} + a_{\Na}}
   + \frac{f_{\Na}A(A^{2} +\frac{a_{\Na}}{N})}{(A^{2} + a_{\Na})^{2}} - (1-\Lambda)g\right] \\
&\qquad \qquad +  \pd{g}{v} + \frac{f_{\Na}'A^{2}}{A^{2} + a_{\Na}}.
\end{split}
\end{equation}
Using \eqref{eq:97} we have that
\begin{equation}
  \label{eq:94} 
  \pd{A}{w} = -\frac{1}{N}\left[\pv \pd{g}{w} + \frac{\partial^{2}W}{\partial v\partial w} g + \pd{h}{w} + \pdd{W}{w}\pd{h}{\pw}\right],
\end{equation}
and
\begin{equation}
  \label{eq:95}
  \pd{A}{v} = a_{\Na}'  - \frac{1}{N}\left[\pv \pd{g}{v} + \pdd{W}{v} g + \pd{h}{v} + \frac{\partial^{2}W}{\partial w \partial v}\pd{h}{\pw}\right].
\end{equation}

\section{Parameter values}
\label{sec:appendix_params}
\subsection{Type I}
\label{sec:type-i}
$v_{\Na} = 1$, $g_{\Na}= 1$, $v_{\mathrm{K}}= -0.7$, $g_{\mathrm{K}}= 2$, $v_{\rl}= -0.5$, $g_{\rl}= 0.5$, $\beta_{\K} = 0.17$, $I_{\rm app}= 0$, $\gamma_{\Na} = 2.5$, $\kappa_{\Na}= 0.025$, $\gamma_{\K}= -3.45$, $\kappa_{\K}= 0.76$, $M = 200$, $N = 1$.

\subsection{Type I with bursting}
\label{sec:type-i-with}

$v_{\Na} = 1.15$, $g_{\Na}= 1$, $v_{\mathrm{K}}= -0.55$, $g_{\mathrm{K}}= 2$, $v_{\rl}= -0.35$, $g_{\rl}= 0.5$, $\beta_{\K} = 0.25$, $I_{\rm app}= 0.01$, 
$\gamma_{\Na} = 2.27$, $\kappa_{\Na}= -0.32$, $\gamma_{\K}= -10$, $\kappa_{\K}= 1.78$, $M = 200$, $N = 3$.

\subsection{Type II}
\label{sec:type-ii}
$v_{\Na} = 3.7$, $g_{\Na}= 0.22$, $v_{\mathrm{K}}= -0.9$, $g_{\mathrm{K}}= 0.4$, $v_{\rl}= -0.36$, $g_{\rl}= 0.1$, $\beta_{\K} = 0.04$, $I_{\rm app}= 0.06$, 
$\gamma_{\Na} = 1.22$, $\kappa_{\Na}= -1.188$, $\gamma_{\K}= -0.8$, $\kappa_{\K}= 0.8$, $M = 40$, $N = 40$.


\end{document}